\newtheorem{thm}{Theorem}[subsection]
\newcommand{\field}[1]{\mathbb{#1}}
\newtheorem{proposition}[thm]{Proposition}
\newtheorem{theorem}[thm]{Theorem}
\newtheorem{remark}[thm]{Remark}
\DeclareMathOperator{\tr}{tr}
\DeclareMathOperator{\id}{id}
\DeclareMathOperator{\End}{End}
\title{\textbf{Entwining Yang-Baxter maps related to \\NLS type equations}}
\author{S. Konstantinou-Rizos\thanks{s.konstantinu.rizos@uniyar.ac.ru, skonstantin84@gmail.com.}}
\author[2]{G. Papamikos\thanks{G.Papamikos@leeds.ac.uk, geopap1983@gmail.com.}}
\affil{Centre of Integrable Systems, P.G. Demidov Yaroslavl State 
University, Russia}
\affil[2]{School of Mathematics, University of Leeds, UK}
\begin{document}

\maketitle
\begin{abstract}
We construct birational maps that satisfy the parametric set-theoretical Yang-Baxter equation and its entwining generalisation. For this purpose, we employ Darboux transformations related to integrable Nonlinear Schr\"odinger type equations and study the refactorisation problems of the product of their associated Darboux matrices. Additionally, we study various algebraic properties of the derived maps, such as invariants and associated symplectic or Poisson structures, and we prove their complete integrability in the Liouville sense. 
\end{abstract}

\hspace{.2cm} \textbf{PACS numbers:} 02.30.Ik, 02.90.+p, 03.65.Fd.

\hspace{.2cm} \textbf{Mathematics Subject Classification:} 35Q53, 81R12.

\hspace{.2cm} \textbf{Keywords:} Entwining parametric Yang-Baxter maps, Darboux transformations, Liouville  

\hspace{2.4cm} integrability, NLS type equations.

\section{Introduction}
This paper is concerned with the construction and the study of integrability of birational solutions to the parametric entwining  Yang-Baxter equation \cite{Brzezinski, KouloukasBanach}
\begin{equation}\label{YB_Ent}
S^{12}\circ R^{13} \circ T^{23}=T^{23}\circ R^{13} \circ S^{12}.
\end{equation}
When $S\equiv R\equiv T$, equation \eqref{YB_Ent} becomes the classical, set-theoretical Yang-Baxter equation \cite{Drinfel'd} which has appeared in many fields of mathematics and physics, see \cite{Veselov3} for a review, and its importance is difficult to overstate. During the last 2-3 decades a lot of interest has been drawn to the study and applications of solutions to the set-theoretical Yang-Baxter equation (the so-called Yang-Baxter maps \cite{Veselov}), and its entwining version \eqref{YB_Ent}; indicatively, we refer to \cite{ABS-2005,  Vincent, Allan-Pavlos, Pavlos, Pavlos2019, Sokor-Sasha, Kouloukas, Kouloukas2, PT, PTV, PSTV, Veselov2, Veselov, Veselov3}. In the modern theory of integrable systems there is a connection between the 3D-consistency property, which is recognised to be a very important property of discrete integrable systems, and the Yang-Baxter equation \cite{ABS-2005, Caudrelier, Pavlos, PT, PTV}. Moreover, there exist solutions which can be associated to matrix refactorisation problems and can be further related to Darboux and B\"acklund transformations \cite{Sokor-Sasha} of soliton equations.

Due to the significance of equation \eqref{YB_Ent}, its connection with various fields of mathematical and physics and its numerous applications in the theory of Integrable Systems, it is important to construct new solutions and study their properties. The motivation to study solutions of the Yang-Baxter equation related to Darboux transformations arises naturally from some recently obtained results \cite{Sokor-Sasha, MPW} where Yang-Baxter maps were obtained using Darboux transformations related to several integrable models. In particular, in \cite{Sokor-Sasha}, Darboux transformations associated with the nonlinear Schr\"odinger equation (NLS), derivative NLS (DNLS) and deformation of the DNLS (dDNLS) equation were employed in order to construct birational integrable Yang-Baxter maps, while in \cite{MPW} a Yang-Baxter map on the sphere constructed in relation to a vectorial generalisation of the sine-Gordon equation. 
The Lax operators of the DNL and the dDNLS equation are invariant under the action of reduction groups isomorphic to $\mathbb{Z}_2$ and $\mathbb{D}_2$, with rational dependence in the spectral parameter and having poles in $\overline{\mathbb{C}}=\mathbb{C}\cup \lbrace\infty \rbrace$ at generic and degenerate (nongeneric) orbits; in \cite{Sokor-Sasha}, only the operators corresponding to degenerate orbits were used. In the first part of this paper, we study the refactorisation problem of the product of two Darboux matrices associated to the DNLS Lax operator, invariant under the $\mathbb{Z}_2$ reduction group with poles in $\overline{\mathbb{C}}$ that form a single generic orbit under the action of the reduction group. In the second part, we employ different Darboux transformations related to the NLS operator and the DNLS operator, invariant under $\mathbb{Z}_2$ having poles at degenerate points and we construct solutions to the entwining Yang-Baxter equation.

The paper is organised as follows. In the next section, we provide preliminary material, definitions and we fix the notation. In particular, we give the definition of the entwining Yang-Baxter equation and we explain the relation between entwining Yang-Baxter maps and matrix refactorisation problems. For entwining maps defined by matrix refactorisation problems, we discuss their birationality, we prove a statement regarding and their invariants and we give the definition of their integrability in the Liouville sense. Finally, we list all the Darboux matrices we are using throughout the text. In section 3, we employ the Darboux transformation for the DNLS Lax operator invariant under the $\mathbb{Z}_2$ reduction group to construct a completely integrable involutive Yang-Baxter map, which can be restricted to the well-known $H_{IV}$ map on symplectic leaves. Moreover, we derive a non-involutive six-dimensional Yang-Baxter map which can be restricted to a three-dimensional map on invariant leaves. Section 4 deals with the derivation of new, birational, parametric, entwining Yang-Baxter maps together with their Lax representations. Again, the maps are derived using Darboux transformations related to the NLS and the DNLS equation, by considering matrix refactorisation problems with different Lax matrices. 
We prove complete integrability of the obtained maps. Finally, in Section 5, we give a summary of the obtained results and explore some ideas for future work.


\section{Preliminaries}
\subsection{Parametric entwining Yang-Baxter equation and matrix refactorisation problems}
Let $V$ be an algebraic variety in $K^N$, where $K$ is any field of characteristic zero; here, $K=\mathbb{C}$ Let also $S_{a,b}\in\End(V\times V)$, $R_{a,b}\in\End(V\times V)$ and $T_{a,b}\in\End(V\times V)$ be three maps defined by
\begin{subequations}
\begin{align}
&(x,y) \stackrel{S_{a,b}}{\longmapsto}(u_s(x,y;a,b),v_s(x,y;a,b)),\\
&(x,y)\stackrel{R_{a,b}}{\longmapsto}(u_r(x,y;a,b),v_r(x,y;a,b)),\\
&(x,y) \stackrel{T_{a,b}}{\longmapsto}(u_t(x,y;a,b),v_t(x,y;a,b)),
\end{align}
\end{subequations}
where $x,y,u,v\in V$, $a,b\in\mathbb{C}$. The \textit{parametric entwining Yang-Baxter equation} reads 
\begin{equation}\label{YB_Ent-ab}
S^{12}_{a,b}\circ R^{13}_{a,c} \circ T^{23}_{b,c}=T^{23}_{b,c}\circ R^{13}_{a,c} \circ S^{12}_{a,b},
\end{equation}
where the upper index $ij$ denotes that the corresponding map acts on $i$ and $j$ component of the triple $(x,y,z)\in V\times V\times V$ according to its definition, and identically on the third remaining. More precisely,
\begin{equation*}
S^{12}_{a,b}(x,y,z)=S_{a,b}\times\id,\quad R^{23}_{a,c}=\id\times R_{a,c} \quad\text{and}\quad T^{13}_{b,c}=\pi^{12}\times T^{23}_{b,c}\times\pi^{12},
\end{equation*}
where $\pi^{12}$ is the permutation map in $V\times V\times V$ defined by $\pi^{12}(x,y,z)=(y,x,z)$. \textit{Parametric entwining Yang-Baxter map} is a system ($S_{a,b},R_{a,b},T_{a,b}$) of maps satisfying \eqref{YB_Ent-ab}. Schematically, the latter equation can be represented on the edges of the cube.
\begin{figure}[ht]
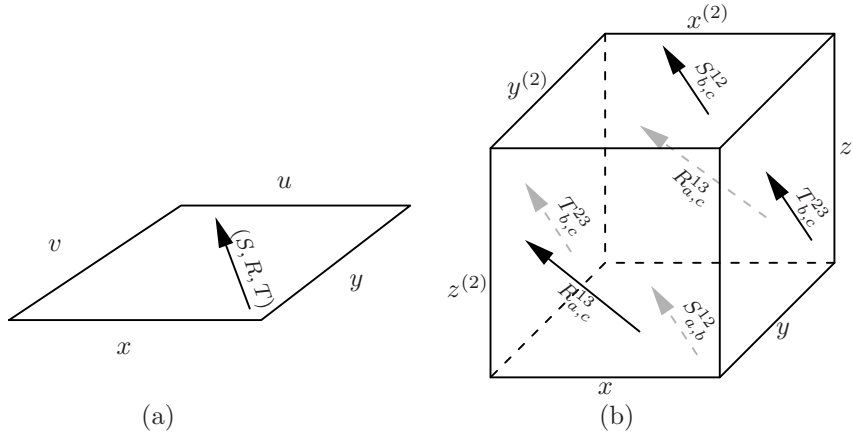

\centering
\centertexdraw{ 
\setunitscale 0.6
\move (-2.2 0.5)  \lvec(0 0.5) \lvec(1.3 1.5) \lvec(-0.7 1.5) \lvec(-2.2 0.5)
\textref h:C v:C \small{\htext(-1.2 0.25){$x$}}
\textref h:C v:C \htext(0.2 1.7){$u$} 
\textref h:C v:C \htext(0.83 0.82){$y$}
\textref h:C v:C \htext(-1.8 1.15){$v$}  
\textref h:C v:C \scriptsize{\rtext td:-70 (-.08 0.95){$(S,R,T)$}}
\move (-0.1 0.6) \arrowheadtype t:F \avec(-0.4 1.4)
\move (2.7 1.1) \arrowheadtype t:F \lpatt(0.067 0.1)  
\setgray 0.7
\avec(2.3 1.7)
\textref h:C v:C \scriptsize{\rtext td:-55 (2.7 1.4){$T^{23}_{b,c}$}}
\move (4.4 1.4) \arrowheadtype t:F \footnotesize{\avec(3.3 2.2)}
\textref h:C v:C { \scriptsize{ \rtext td:-40 (3.75 1.66){\setgray 0.7 $R^{13}_{a,c}$}}}
\move (3.8 0.2) \arrowheadtype t:F \avec(3.4 0.8)
\textref h:C v:C \scriptsize{\rtext td:-55 (3.8 0.5){$S^{12}_{a,b}$}}
\lpatt()
\setgray 0
\move (2 0)  \lvec(4 0) \lvec(5 1) 
\lpatt(0.067 0.1) \lvec(3 1) \lvec(2 0) 
\lpatt() \lvec(2 2) \lvec(3 3) 
\lpatt (0.067 0.1) \lvec(3 1) 
\lpatt() \move (3 3) \lvec(5 3) \lvec(4 2) \lvec(4 0) 
\move (2 2) \lvec(4 2)
\move (5 3) \lvec(5 1)
\textref h:C v:C \small{\htext(3 -0.1){$x$}}
\textref h:C v:C \small{\htext(4.55 0.4){$y$}}
\textref h:C v:C \small{\htext(5.1 2){$z$}}
\move (4.8 1.2) \arrowheadtype t:F \avec(4.4 1.8)
\textref h:C v:C \scriptsize{\rtext td:-55 (4.78 1.48){$T^{23}_{b,c}$}}
\move (3.3 0.4) \arrowheadtype t:F \avec(2.3 1.2)
\textref h:C v:C \scriptsize{\rtext td:-40 (2.75 0.66){$R^{13}_{a,c}$}}
\textref h:C v:C \small{\htext(1.8 0.8){$z^{(2)}$}}
\move (3.9 2.3) \arrowheadtype t:F \avec(3.5 2.9)
\textref h:C v:C \scriptsize{\rtext td:-55 (3.9 2.6){$S^{12}_{b,c}$}}
\textref h:C v:C \small{\htext(3.9 3.15){$x^{(2)}$}} \htext(2.33 2.55){$y^{(2)}$}
\textref h:C v:C \htext(3.1 -0.35){(b)}
\textref h:C v:C \htext(-0.9 -0.35){(a)}
}
\caption{Cubic representation of (a) the parametric YB map\index{Yang-Baxter (YB) map(s)!parametric} and (b) the corresponding YB equation.}\label{YBmap-Eq}\index{Yang-Baxter (YB) equation}
\end{figure}

\begin{remark}\normalfont
The maps $S_{a,b}$, $R_{a,b}$ and $T_{a,b}$ individually do not necessary satisfy the Yang-Baxter equation.
\end{remark}

The relation between equation \eqref{YB_Ent-ab} and matrix refactorisation problems was studied in \cite{KouloukasBanach}, following \cite{Veselov2}. Specifically, a \textit{Lax triple} for a set of maps ($S_{a,b},R_{a,b},T_{a,b}$) is a set of operators ($L_1,L_2,L_3$), $L_i=L_i(x;a,\lambda)$, $i=1,2,3$, which depend on a variable $x\in V$, a parameter $a\in\mathbb{C}$ and a spectral parameter\footnote{We will not be writing explicitly the dependence on the spectral parameter.} $\lambda\in\mathbb{C}$, such that
\begin{subequations}\label{L123}
\begin{align}
&L_1(u_s;a)L_2(v_s;b)=L_2(y;b)L_1(x;a);\label{L12}\\
&L_1(u_r;a)L_3(v_r;b)=L_3(y;b)L_1(x;a);\label{L13}\\
&L_2(u_t;a)L_3(v_t;b)=L_3(y;b)L_2(x;a).\label{L23}
\end{align}
\end{subequations}
If \eqref{L12}, \eqref{L13} and \eqref{L23} are equivalent to $(u_s,v_s)=S_{a,b}(x,y)$, $(u_r,v_r)=R_{a,b}(x,y)$ and $(u_t,v_t)=T_{a,b}(x,y)$, respectively, then the triple ($L_1,L_2,L_3$) is called \textit{strong} Lax triple \cite{KouloukasBanach}. For $L_1=L_2=L_3=L$ and $S_{a,b}=R_{a,b}=T_{a,b}$, the above equations reduce to 
\begin{equation}\label{Lax-eq}
L(u;a)L(v;b)=L(y;b)L(x;a)
\end{equation}
which is the classical Lax representation of a Yang-Baxter map \cite{Veselov2}. If \eqref{Lax-eq} defines a rational map, then this map is birational (see \cite{Sokor-Sasha}).

\begin{remark}\normalfont
Rational entwining Yang-Baxter maps defined by matrix refactorisation problems \eqref{L123} are not necessarily birational. This is due to a symmetry break; that is, unlike \eqref{Lax-eq}, equation \eqref{L12}, for instance, is not invariant under the transformation $(u_s,v_s;a,b)\leftrightarrow (y,x;b,a)$. It turns out that all the obtained maps in this paper are birational; however, birationality is verified for every map individually. 
\end{remark}

The `entwining' Yang-Baxter property of the maps defined from the matrix refactorisation problems \eqref{L123} can be verified by straight forward substitution to \eqref{YB_Ent-ab}. However, since the associated maps are rational, and the triple composition of rational maps is usually complicated, we will be using the following criterion which allows one to work only with polynomial expressions. 
\begin{proposition} (Kouloukas-Papageorgiou \cite{KouloukasBanach})
If the matrix trifactorisation $L_1(x;a)L_2(y;b) L_3(z;c)$ $=L_1(x';a)L_2(y';b)L_3(z';c)$ implies $x=x',y=y',z=z'$, then the system of maps ($S_{a,b},R_{a,b},T_{a,b}$), defined by the refactorisation problems \eqref{L123}, is a parametric entwining Yang-Baxter map.
\end{proposition}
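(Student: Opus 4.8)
The plan is to show that the trifactorisation hypothesis forces each of the three refactorisation problems \eqref{L123} to be uniquely solvable in the correct way, so that the composite maps on both sides of \eqref{YB_Ent-ab} coincide. Concretely, I would start from a triple $(x,y,z)\in V\times V\times V$ and trace the two sides of the Yang-Baxter equation \eqref{YB_Ent-ab} through their constituent maps, writing down the intermediate variables produced at each stage, and then exhibit a single product $L_1 L_2 L_3$ that both sides evaluate. The key observation is that each elementary map is, by definition, an equality between two orderings of a product of two Lax matrices; composing three such maps amounts to performing a sequence of transpositions of adjacent factors in the product $L_1(x;a)L_2(y;b)L_3(z;c)$ so as to reverse it into $L_3 L_2 L_1$.

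First I would compute the left-hand side $S^{12}_{a,b}\circ R^{13}_{a,c}\circ T^{23}_{b,c}$. Applying $T^{23}_{b,c}$ to $(x,y,z)$ uses \eqref{L23} to rewrite $L_2(y;b)L_3(z;c)$ as $L_3(\tilde z;c)L_2(\tilde y;b)$; then $R^{13}_{a,c}$ uses \eqref{L13} on the outer pair; then $S^{12}_{a,b}$ uses \eqref{L12}. At each step the defining refactorisation guarantees that the matrix product $L_1 L_2 L_3$ is preserved up to reordering, so that after the three maps the product $L_1(x;a)L_2(y;b)L_3(z;c)$ has been transformed into $L_1(x';a)L_2(y';b)L_3(z';c)$ for the output triple $(x',y',z')$ of the left-hand side; crucially the value of the product $L_1(x;a)L_2(y;b)L_3(z;c)$ is an invariant of this whole process. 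I would then repeat the same bookkeeping for the right-hand side $T^{23}_{b,c}\circ R^{13}_{a,c}\circ S^{12}_{a,b}$, obtaining an output triple $(x'',y'',z'')$ and verifying that it arises from the identical product $L_1(x;a)L_2(y;b)L_3(z;c)$ being transformed into $L_1(x'';a)L_2(y'';b)L_3(z'';c)$.

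The crux is then immediate: both the left-hand side and the right-hand side produce factorisations of one and the same matrix $L_1(x;a)L_2(y;b)L_3(z;c)$ as an ordered product $L_1(\cdot;a)L_2(\cdot;b)L_3(\cdot;c)$. Since each elementary map preserves the overall product (this is exactly what \eqref{L12}--\eqref{L23} assert, the two sides of each being equal matrices), we have
\begin{equation*}
L_1(x';a)L_2(y';b)L_3(z';c)=L_1(x;a)L_2(y;b)L_3(z;c)=L_1(x'';a)L_2(y'';b)L_3(z'';c).
\end{equation*}
The hypothesis of the proposition states precisely that such an equality of trifactorisations forces $x'=x''$, $y'=y''$ and $z'=z''$. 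Hence the two composite maps agree on an arbitrary triple, which is exactly \eqref{YB_Ent-ab}.

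I expect the main obstacle to be purely organisational rather than conceptual: one must be careful about which indices $ij$ each elementary map acts on and hence which adjacent pair of factors gets transposed, since $R^{13}$ acts on the non-adjacent first and third slots and is defined via the conjugation by $\pi^{12}$. The bookkeeping must confirm that after conjugating back, the net effect of the ordered composition really is to reverse the product into the opposite factor order, so that both sides land on genuine $L_1 L_2 L_3$-factorisations of the same matrix; once this reordering is tracked correctly the uniqueness hypothesis closes the argument with no further computation.
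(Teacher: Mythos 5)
Your overall strategy is the standard one (note the paper itself gives no proof of this proposition, citing it to Kouloukas--Papageorgiou), but your bookkeeping reverses the direction of every elementary map, and this breaks the central step. By \eqref{L23}, $T_{b,c}(y,z)=(\tilde{y},\tilde{z})$ is defined by $L_2(\tilde{y};b)L_3(\tilde{z};c)=L_3(z;c)L_2(y;b)$: the \emph{outputs} appear in the direct order $L_2L_3$ and the \emph{inputs} in the reversed order $L_3L_2$. You instead assert that applying $T^{23}_{b,c}$ rewrites $L_2(y;b)L_3(z;c)$ as $L_3(\tilde{z};c)L_2(\tilde{y};b)$; that relation defines $T^{-1}_{b,c}$, not $T_{b,c}$ (and its solvability is not even guaranteed by the hypotheses). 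Consequently, the matrix that is actually conserved along either composition is the reversed-order product $L_3(z;c)L_2(y;b)L_1(x;a)$ of the initial data, not $L_1(x;a)L_2(y;b)L_3(z;c)$ as you claim. Your displayed chain of equalities is therefore false, and visibly so: if
\begin{equation*}
L_1(x';a)L_2(y';b)L_3(z';c)=L_1(x;a)L_2(y;b)L_3(z;c)
\end{equation*}
held, the uniqueness hypothesis would force $(x',y',z')=(x,y,z)$, i.e.\ both sides of \eqref{YB_Ent-ab} would be the identity map, which is absurd (the Adler--Yamilov map \eqref{Adler-Yamilov} is certainly not the identity).

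The repair is short. Track $P=L_3(z;c)L_2(y;b)L_1(x;a)$. For the left-hand side of \eqref{YB_Ent-ab}: applying $T^{23}_{b,c}$ replaces $L_3(z;c)L_2(y;b)$ by $L_2(\tilde{y};b)L_3(\tilde{z};c)$, giving $P=L_2(\tilde{y};b)L_3(\tilde{z};c)L_1(x;a)$; now $L_3$ and $L_1$ are adjacent, so \eqref{L13} applies and $R^{13}_{a,c}$ gives $P=L_2(\tilde{y};b)L_1(\tilde{x};a)L_3(z';c)$; finally \eqref{L12} gives $P=L_1(x';a)L_2(y';b)L_3(z';c)$. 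The same computation for the right-hand side (apply \eqref{L12}, then \eqref{L13}, then \eqref{L23}, in that order) gives $P=L_1(x'';a)L_2(y'';b)L_3(z'';c)$. Equating these two trifactorisations of the \emph{same} matrix $P$ and invoking the hypothesis yields $(x',y',z')=(x'',y'',z'')$, which is precisely \eqref{YB_Ent-ab}; the uniqueness hypothesis is applied only to the two output triples, never to the input. Note also that your worry about $R^{13}$ acting on non-adjacent slots dissolves once the correct (reversed) product is used: the factors $L_3$ and $L_1$ are adjacent in $P$ exactly at the moment $R^{13}$ is applied, in both compositions, so no conjugation bookkeeping is needed.
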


Regarding the invariants of parametric entwining Yang-Baxter maps, we have the following.

\begin{proposition}
The quantities $\tr(L_2(y;b)L_1(x;a))$, $\tr(L_3(y;b)L_1(x;a))$ and $\tr(L_3(y;b)L_2(x;a))$ constitute generating functions of invariants for maps $S_{a,b}$, $R_{a,b}$ and $T_{a,b}$, respectively.
\end{proposition}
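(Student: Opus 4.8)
The plan is to exploit the cyclic invariance of the trace together with the refactorisation relations \eqref{L12}, \eqref{L13} and \eqref{L23}. Since the three cases are structurally identical, I would treat them in parallel; let me spell out the argument for the map $S_{a,b}$ and the quantity $\Phi(x,y;a,b):=\tr(L_2(y;b)L_1(x;a))$, the other two following by replacing the pair $(L_1,L_2)$ with $(L_1,L_3)$ and $(L_2,L_3)$ respectively. The key observation is that $\Phi$ depends on the spectral parameter $\lambda$ through the $L_i$, so that establishing invariance of $\Phi$ for all $\lambda$ will furnish an entire family of invariants.

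First I would evaluate $\Phi$ at the image point $(u_s,v_s)=S_{a,b}(x,y)$, i.e. form $\Phi(u_s,v_s;a,b)=\tr(L_2(v_s;b)L_1(u_s;a))$. By cyclicity of the trace this equals $\tr(L_1(u_s;a)L_2(v_s;b))$, and the refactorisation identity \eqref{L12} then lets me substitute $L_1(u_s;a)L_2(v_s;b)=L_2(y;b)L_1(x;a)$ inside the trace. Hence $\Phi(u_s,v_s;a,b)=\tr(L_2(y;b)L_1(x;a))=\Phi(x,y;a,b)$, which is exactly the statement that $\Phi$ is preserved by $S_{a,b}$. The same two lines, applied verbatim to \eqref{L13} and \eqref{L23} and using cyclicity once more, give $\tr(L_3(y;b)L_1(x;a))$ as an invariant of $R_{a,b}$ and $\tr(L_3(y;b)L_2(x;a))$ as an invariant of $T_{a,b}$.

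To pass from a single invariant function to the claimed \emph{generating function}, I would note that the invariance $\Phi(u_s,v_s)=\Phi(x,y)$ holds identically in $\lambda$. Expanding $\Phi$ as a (Laurent or polynomial) series in $\lambda$, the identity must hold coefficient by coefficient, so each coefficient is separately a function on $V\times V$ that is invariant under $S_{a,b}$; this is precisely the sense in which $\tr(L_2(y;b)L_1(x;a))$ generates invariants.

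I do not anticipate a genuine obstacle, since the whole argument reduces to cyclicity of the trace combined with the refactorisation relations. The only point deserving care is the final step: one should check that the $\lambda$-expansion coefficients are honest functions on $V\times V$ (with $a,b$ as parameters) and that their invariance is inherited term by term from the identity in $\lambda$; one is not, of course, asserting that the resulting invariants are all functionally independent, which is a separate matter needed later for the Liouville integrability claims.
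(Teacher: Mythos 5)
Your proposal is correct and follows essentially the same route as the paper's own proof: apply the refactorisation relation \eqref{L12} inside the trace, use cyclicity, and then expand in the spectral parameter $\lambda$ to read off invariants coefficient by coefficient. The only difference is cosmetic (you start from $\tr(L_2(v_s;b)L_1(u_s;a))$ and cycle before substituting, while the paper substitutes first and cycles after), so there is nothing to add.
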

\begin{proof}
We have
\begin{equation}\label{inv-gen}
\tr(L_1(u_s;a)L_2(v_s;b))\stackrel{\eqref{L12}}{=}\tr(L_2(y;b)L_1(x;a))=\tr(L_1(x;a)L_2(y;b)).
\end{equation}
If we expand function $\tr(L_1(x;a)L_2(y;b))=\sum_k I_k(x,y;a,b)\lambda^k$, it follows from \eqref{inv-gen} that
$$
I_i(u,v;a,b)=I_i(x,y;a,b),
$$
i.e. $I_i$ are invariants  of map $S_{a,b}$. Similarly we prove for $R_{a,b}$ and $T_{a,b}$.
\end{proof}

\subsection{Liouville integrability}
In this paper we study the integrability of the derived Yang-Baxter and entwining Yang-Baxter maps in the Liouville sense. Following \cite{Maeda, Veselov4}, an $2N$-dimensional map is said to be \textit{completely} or \textit{Liouville} integrable, if it admits $N$ functionally independent and globally defined invariants, $I_i$, $i=1,\ldots,N$, that are in involution with respect to an invariant Poisson bracket, i.e. $\{I_i,I_j\}=0$, $i\neq j$ and the map is Poisson with respect to the bracket.

\subsection{Darboux transformations}
In this paper, we use Darboux transformations related to the Lax operators $\mathcal{L}_{NLS} \in\mathfrak{sl}_2\left[\lambda\right]\left[D_x\right]$ and $\mathcal{L}_{DNLS}\in\mathfrak{sl}_2[\lambda]^{\langle s_1 \rangle}[D_x]$, namely
\begin{equation*}
\mathcal{L}_{NLS} =D_x+\lambda \left(\begin{array}{cc}1 & 0\\0 & -1\end{array}\right)+\left(\begin{array}{cc}0 & 2p\\2q & 0\end{array}\right),\qquad
\mathcal{L}_{DNLS}=D_x+\lambda^2 \left(\begin{array}{cc}1 & 0\\0 & -1\end{array}\right)+\lambda\left(\begin{array}{cc}0 & 2p\\2q & 0\end{array}\right)
\end{equation*}
i.e. the spatial parts of the Lax pairs for the nonlinear Schr\"odinger (NLS) and the derivative nonlinear Schr\"odinger (DNLS) equation. The second operator is invariant under the transformation $s_1: \mathcal{L}(\lambda) \rightarrow \sigma_{3}\mathcal{L}(-\lambda) \sigma_{3}$; this involution generates the reduction group which is isomorphic to the $\field{Z}_2$ group. The automorphism $\lambda\rightarrow -\lambda$ has both degenerate and generic orbits in $\overline{\mathbb{C}}$. Generic orbits are those orbits of a group action that correspond to a generic point of the space, while degenerate orbits are those which correspond to a nongeneric (fixed and periodic point).


\subsubsection{NLS case}
The well-known Darboux matrix for the NLS equation\footnote{Here by $(p,q)$ we denote the original potentials, whereas by $(\tilde{p},\tilde{q})$ we denote the new potentials obtained by the Darboux map.} is 
\begin{equation}\label{M-NLS}
M(p,\tilde{q}):=\lambda \left(\begin{array}{cc}1 & 0\\0 & 0\end{array}\right)+\left(\begin{array}{cc}a+p\tilde{q} & p\\\tilde{q} & 1 \end{array}\right),
\end{equation}
and its degeneration \cite{SPS} is given by
\begin{equation}\label{M-degen}
M_c(p,f)=\lambda \left(\begin{array}{cc}1 & 0\\0 & 0\end{array}\right)+\left(\begin{array}{cc}f & p\\\frac{c}{p} & 0 \end{array}\right),\quad f=\frac{p_x}{2p}.
\end{equation}
\subsubsection{DNLS case: Degenerate orbits}
In the case of the DNLS equation, a Darboux transformation is given by (\cite{SPS}) the following matrix $M$
\begin{equation} \label{DT-sl2-gen}
M(p,\tilde{q},f;c_1,c_2) := \lambda^{2}\left(\begin{array}{cc} f & 0\\ 0 & 0\end{array}\right)+\lambda\left(\begin{array}{cc} 0 & fp\\ f\tilde{q} & 0\end{array}\right)+\left(\begin{array}{cc} c_1 & 0\\ 0 & c_2 \end{array}\right),
\end{equation}
where $p$ and $q$ satisfy a system of differential equations (\cite{Sokor-Sasha}) which possesses the following first integral,
\begin{equation} \label{sl2-D-con-det-gen}
\partial_x \left(f^{2}p\tilde{q}-c_2 f\right)=0.
\end{equation}
For $c_1=c_2=1$, using the above first integral, one can replace $f$ by
\begin{equation} \label{sl2-D-int}
f=k+(fp)(f\tilde{q}).
\end{equation}
In the case where $c_1=1$ and $c_2=0$, after eliminating $\tilde{q}$ with use of (\ref{sl2-D-con-det-gen}), the Darboux matrix (\ref{DT-sl2-gen}) degenerates to
\begin{equation} \label{DT-sl2-degen}
M(p,f;k) := \lambda^{2}\left(\begin{array}{cc} f & 0\\ 0 & 0\end{array}\right)+\lambda\left(\begin{array}{cc} 0 & fp\\ \frac{k}{fp} & 0\end{array}\right)+\left(\begin{array}{cc} 1 & 0\\ 0 & 0 \end{array}\right).
\end{equation}

\subsubsection{DNLS case: Generic orbits}
\begin{equation} \label{Z2-M-D1}
M(p, \tilde{q})\, =\, \frac{1}{\lambda-1} \left(\begin{array}{cc} \tilde{q} & 1\\1
& -p \end{array} \right) - \frac{1}{\lambda+1}\left(\begin{array}{cc} \tilde{q} &
-1\\-1 & -p \end{array} \right), \end{equation}
where
\begin{equation*}
p\tilde{q} = -1, \quad \partial_x p=4 p \left(\frac{p}{p-q}-\frac{\tilde{p}}{\tilde{p}-\tilde{q}} \right).
\end{equation*}

Another Darboux matrix is given by
\begin{equation} \label{Z2-M-gen}
M(p, \tilde{q}, f;c_1, c_2) = \frac{f}{\lambda-1} \left(\begin{array}{cc}
\tilde{q} & -p \tilde{q} \\ 1 & -p \end{array} \right) -
\frac{f}{\lambda+1}\left(\begin{array}{cc} \tilde{q} & p \tilde{q} \\ -1 & -p
\end{array} \right) + \left(\begin{array}{cc} c_1 &0 \\ 0 & c_2 \end{array}
\right), 
\end{equation}
where $p$,  $\tilde{q}$ and $f$ satisfy a system of differential equations (see \cite{Sokor-Sasha}) which possess the following first integral
\begin{equation} \label{z2-fi}
\Phi(c_1, c_2) = \left(2 f p+c_2\right) \left(2 f \tilde{q}-c_1\right), \qquad |c_1|^2+|c_2|^2 \ne 0.
\end{equation}

\section{Derivation of Yang-Baxter maps related to the DNLS equation}
In this section we make use of Darboux matrices associated to the DNLS equation, in the case of $\mathbb{Z}_2$ reduction with generic orbits, to derive solutions to the classical Yang-Baxter equation. We study the integrability of the obtained maps.

\subsection{An involutive Yang-Baxter map}
We change $(p,\tilde{q})\rightarrow (x_1,x_2)$ in \eqref{Z2-M-D1}, namely we consider the following matrix
\begin{equation} 
M(\textbf{x}) = \frac{1}{\lambda-1} \left(\begin{array}{cc} x_2 & 1\\1
& -x_1 \end{array} \right) - \frac{1}{\lambda+1}\left(\begin{array}{cc} x_2 &
-1\\-1 & -x_1 \end{array} \right), \quad \textbf{x}:=(x_1,x_2),
\end{equation}
and substitute it to the Lax equation
\begin{equation}\label{L-DNLS_gen}
M(\textbf{u})M(\textbf{v})=M(\textbf{y})M(\textbf{x}).
\end{equation}
The above, Lax equation will imply a set of polynomial equations in variables $\textbf{u},\textbf{v},\textbf{x},\textbf{y}$ which define the algebraic variety. In this case, the algebraic variety is union of two four-dimensional components. The first one is obvious from refactorisation \eqref{L-DNLS_gen} and it corresponds to the permutation map
\begin{equation}
\textbf{u}\mapsto \textbf{y}, \qquad \textbf{v} \mapsto \textbf{x},
\end{equation}
which is a trivial YB map. The second one can be represented as the following involutive, birational, four-dimensional YB map
\begin{equation}\label{YB-DNLS_gen}
(\textbf{x},\textbf{y})\overset{Y}{\longrightarrow } (\textbf{u},\textbf{v})=\left(\frac{y_1-x_2}{x_1-y_2}x_1,\frac{y_2-x_1}{x_2-y_1}x_2,\frac{y_2-x_1}{x_2-y_1}y_1,\frac{y_1-x_2}{x_1-y_2}y_2 \right).
\end{equation}

For the integrability of this map we have the following.

\begin{proposition}
Map \eqref{YB-DNLS_gen} is completely integrable.
\end{proposition}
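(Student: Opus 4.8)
The plan is to show that the map \eqref{YB-DNLS_gen} is a four-dimensional ($2N=4$, so $N=2$) map that is Poisson with respect to an invariant Poisson bracket and admits $N=2$ functionally independent invariants in involution. The natural source of invariants is Proposition~2.1.4: since this is a genuine Yang-Baxter map with Lax matrix $M(\mathbf{x})$, the trace $\tr(M(\mathbf{y})M(\mathbf{x}))$ is a generating function of invariants. First I would compute the product $M(\mathbf{y})M(\mathbf{x})$ and expand its trace as a rational function of $\lambda$; collecting the coefficients of the independent powers (or, since $M$ has simple poles at $\lambda=\pm 1$, the partial-fraction coefficients) should yield a small number of invariants $I_1,I_2$ depending on $(x_1,x_2,y_1,y_2)$. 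I expect these to be low-degree rational expressions such as the "obvious" ones $I_1=x_1 x_2$, $I_2=y_1 y_2$ (or symmetric combinations thereof), which one verifies directly are preserved by \eqref{YB-DNLS_gen}; functional independence is then checked by confirming that the Jacobian of $(I_1,I_2)$ has rank $2$ generically.

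Next I would produce the invariant Poisson structure. The standard route for Yang-Baxter maps coming from refactorisation is to postulate a quadratic Poisson bracket on the variables $(x_1,x_2)$ and $(y_1,y_2)$ (with the two factors Poisson-commuting, i.e. a direct sum of two copies of a bracket on $V$), and then fix the coefficients by demanding that the map \eqref{YB-DNLS_gen} be a Poisson map. A convenient ansatz, motivated by the structure of $M(\mathbf{x})$, is something like $\{x_1,x_2\}=c\,x_1 x_2$ on each component; I would determine the admissible brackets by imposing invariance and the Jacobi identity. With such a bracket in hand I then verify the involutivity $\{I_1,I_2\}=0$ by direct substitution.

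Carrying out the verification steps — that $I_1,I_2$ are genuinely invariant under $Y$, that the proposed bracket is Poisson-invariant, and that the invariants are in involution — is essentially routine once the correct candidates are identified, so the computations, though somewhat tedious, present no conceptual difficulty. The main obstacle I anticipate is the first one: pinning down the right invariants and, above all, the correct invariant Poisson bracket. Because \eqref{YB-DNLS_gen} mixes the four variables through the shared factors $\tfrac{y_1-x_2}{x_1-y_2}$ and $\tfrac{y_2-x_1}{x_2-y_1}$, a naive product bracket on the two copies of $V$ may fail to be invariant, and one may instead need a bracket that couples the $x$- and $y$-variables. Identifying an ansatz flexible enough to admit the map as a Poisson map, while still satisfying Jacobi and making $I_1,I_2$ involutive, is where the real work lies; once that structure is fixed, the Liouville integrability of this $4$-dimensional map with its two independent commuting invariants follows immediately from the definition in Section~2.2.
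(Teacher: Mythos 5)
Your proof skeleton — invariants from $\tr(M(\mathbf{y})M(\mathbf{x}))$, an invariant Poisson bracket, then involutivity of the invariants — is exactly the skeleton of the paper's proof, but there is a genuine gap: the invariant Poisson bracket, which is the entire substance of the argument, is never produced, and you yourself defer it as ``where the real work lies.'' The paper's proof consists precisely of exhibiting it: the quadratic bracket $\{x_1,x_2\}=-x_1x_2$, $\{y_1,y_2\}=y_1y_2$, $\{x_i,y_j\}=x_iy_j$ for $i,j=1,2$, with respect to which its two invariants commute. You correctly anticipate that a naive product bracket on the two copies of $V$ will not work and that the bracket must couple the $x$- and $y$-variables (the paper's bracket does, through $\{x_i,y_j\}=x_iy_j$), but anticipating the shape of the answer is not the same as producing and verifying it; as written, your proposal establishes nothing beyond the existence of independent invariants.

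There is also a concrete misprediction in your first step. The trace computation you plan to carry out does not yield $x_1x_2$ and $y_1y_2$: expanding $\tr(M(\mathbf{y})M(\mathbf{x}))$ in partial fractions gives, up to additive constants, the single function $I=x_1y_1+x_2y_2$, which the paper then splits into the pair $I_1=x_1y_1$, $I_2=x_2y_2$ — invariants that couple the two factors. Your guessed functions are nevertheless genuine invariants of \eqref{YB-DNLS_gen}: the two multipliers $\frac{y_1-x_2}{x_1-y_2}$ and $\frac{y_2-x_1}{x_2-y_1}$ appearing in the map are mutual reciprocals, so $u_1u_2=x_1x_2$ and $v_1v_2=y_1y_2$. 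But they are \emph{not} in involution with respect to the paper's bracket (one computes $\{x_1x_2,y_1y_2\}=4x_1x_2y_1y_2\neq 0$ there), so the bracket your pair of invariants requires is a different one — it can be obtained, for instance, by transporting the paper's bracket through the swap $x_2\leftrightarrow y_1$, under which your invariants become the paper's — and this bracket, too, would have to be exhibited, checked to satisfy Jacobi, and checked to make $Y$ a Poisson map. In short: the architecture of your argument matches the paper, but the two concrete deliverables — the invariants actually generated by the trace, and the invariant Poisson bracket — are respectively mispredicted and missing, and without the latter the proof is incomplete.
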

\begin{proof}
From the trace of the monodromy matrix we obtain the following invariant
\begin{equation}
I=x_1y_1+x_2y_2.
\end{equation}
However, this is a sum of the following two invariants
\begin{equation}\label{2invs}
I_i=x_iy_i, \qquad i=1,2.
\end{equation}
These are in involution with respect to the following Poisson bracket
\begin{equation}
\{x_1,x_2\}_1=-x_1x_2, \quad\{y_1,y_2\}_1=y_1y_2\quad\{x_1,y_2\}_1=x_1y_2 \quad\text{and}\quad\{x_i,y_j\}_1=x_iy_j,~~i,j=1,2.
\end{equation}
\end{proof}

\begin{remark}\normalfont
Another Poisson bracket is
\begin{subequations}
\begin{align}
&\{x_1,x_2\}_2=-x_1x_2, \quad \{y_1,y_2\}_2=-y_1y_2,\quad\{x_1,y_2\}_2=2x_1y_2  \\
&\{x_i,y_i\}_2=-x_iy_i, ~~i=1,2, \quad \text{and}\quad \{x_2,y_1\}=0.
\end{align}
\end{subequations}
It is  $\{I_1,I_2\}_2=0$. Here, the rank of the corresponding Poisson matrix is 2, and the following quantities are Casimir functions 
\begin{equation}
C_1=\frac{x_2}{y_1} \quad \text{and} \quad C_2=\frac{y_1}{x_2}.
\end{equation}
The latter are preserved by \eqref{YB-DNLS_gen}, namely $C_i\circ Y=C_i$, $i=1,2$. Therefore, map \eqref{YB-DNLS_gen} is completely integrable.
\end{remark}


\subsection{Restriction to $H_{IV}$ on symplectic leaves}
We have the following.

\begin{proposition}
\begin{enumerate}
	\item Map \eqref{YB-DNLS_gen} can be restricted to a two-dimensional YB map, $Y\in\End(A_a\times B_b)$ on the symplectic leaves $A_a:=\left\{(x_1,y_1)\in K^2;x_1y_1=a\right\}$ and $B_b:=\left\{(x_2,y_2)\in K^2;x_2y_2=b\right\}$;
	\item This two-dimensional map is the $H_{IV}$ map.
\end{enumerate}
\end{proposition}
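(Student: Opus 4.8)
The plan is to push the map \eqref{YB-DNLS_gen} down to the leaves by means of the two invariants, to compute the reduced map in adapted coordinates, and finally to recognise it as $H_{IV}$ after a simple normalisation. First I would establish that the leaves are invariant. The quantities $I_i=x_iy_i$ from \eqref{2invs} are invariants of \eqref{YB-DNLS_gen}: writing $u_1=\frac{y_1-x_2}{x_1-y_2}x_1$ and $v_1=\frac{y_2-x_1}{x_2-y_1}y_1$, the factors combine as $u_1v_1=\frac{(y_1-x_2)(y_2-x_1)}{(x_1-y_2)(x_2-y_1)}x_1y_1=x_1y_1$, since $(y_2-x_1)=-(x_1-y_2)$ and $(x_2-y_1)=-(y_1-x_2)$ force the two fractions to cancel; the same cancellation gives $u_2v_2=x_2y_2$. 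Hence $Y$ maps the level set $\{x_1y_1=a\}\cap\{x_2y_2=b\}=A_a\times B_b$ into itself, and therefore restricts to $Y\in\End(A_a\times B_b)$. This already yields the first half of item (1); the Yang--Baxter property of the restriction will follow from its identification with $H_{IV}$ below (equivalently, it is inherited from $Y$ together with the invariance of the leaves, with $a,b$ playing the role of the parameters attached to the two factors).

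Next I would choose $(x_1,x_2)$ as coordinates on the two-dimensional leaf $A_a\times B_b$, eliminating $y_1=a/x_1$ and $y_2=b/x_2$. Substituting into \eqref{YB-DNLS_gen} and setting $s:=x_1x_2$, the two free components collapse to $u_1=-x_2\frac{s-a}{s-b}$ and $u_2=-x_1\frac{s-b}{s-a}$, while $v_1=a/u_1$ and $v_2=b/u_2$ are then fixed by the leaf relations; note that $u_1u_2=x_1x_2$. Thus the reduced map is the two-dimensional map $(x_1,x_2)\mapsto(u_1,u_2)$ just displayed.

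Finally I would match this with the $H_{IV}$ map. Recalling its standard form, I would exhibit an explicit change of variables: the rescaling $x_2\mapsto-x_2$ together with the relabelling $(a,b)\mapsto(-a,-b)$ turns the reduced map into $u=y\frac{xy-a}{xy-b}$, $v=x\frac{xy-b}{xy-a}$, which is $H_{IV}$. Since $H_{IV}$ is a parametric Yang--Baxter map, the restricted map is a two-dimensional Yang--Baxter map, with $a,b$ the parameters of the two factors $A_a,B_b$, thereby completing both items.

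I expect the main obstacle to be twofold: carrying out the algebraic simplification of the composed rational expressions without error, and --- more delicate --- pinning down the precise normalisation of $H_{IV}$ used in the references, so that the correct scaling and parameter identification can be written down; a secondary subtlety is to argue cleanly that the leaf parameters $a,b$ are exactly the spectral parameters of the resulting parametric Yang--Baxter map, rather than merely verifying the pointwise restriction.
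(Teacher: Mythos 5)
Your part (1) is correct and follows the paper's own route: you check that $u_1v_1=x_1y_1$ and $u_2v_2=x_2y_2$ (these are exactly the invariants \eqref{2invs}, already recorded in the paper), then eliminate $y_1=a/x_1$, $y_2=b/x_2$, and your reduced map $u_1=-x_2\tfrac{s-a}{s-b}$, $u_2=-x_1\tfrac{s-b}{s-a}$ with $s=x_1x_2$ coincides with the paper's
$\bigl(\tfrac{a-x_1x_2}{-b+x_1x_2}x_2,\;\tfrac{b-x_1x_2}{-a+x_1x_2}x_1\bigr)$.

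The gap is in part (2), and it is precisely the point you flagged but left unresolved. Your sign-flip conjugation $(x_1,x_2)\mapsto(x_1,-x_2)$ together with $(a,b)\mapsto(-a,-b)$ is computed correctly and gives $u=y\tfrac{xy-a}{xy-b}$, $v=x\tfrac{xy-b}{xy-a}$; however, this is \emph{not} the $H_{IV}$ map in the normalisation used by the paper and by \cite{PSTV}, which is $u=y\tfrac{1+\alpha xy}{1+\beta xy}$, $v=x\tfrac{1+\beta xy}{1+\alpha xy}$. Setting $a=-1/\alpha$, $b=-1/\beta$ one finds $\tfrac{xy-a}{xy-b}=\tfrac{\beta}{\alpha}\cdot\tfrac{1+\alpha xy}{1+\beta xy}$, so your map differs from $H_{IV}$ by the constant factors $\beta/\alpha$ and $\alpha/\beta$ in the two components. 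Removing these constants is exactly the missing normalisation step, and it is what the paper does instead of your sign flip: substitute $a\to-1/\alpha$, $b\to-1/\beta$ directly into the reduced map and rescale the outputs $u_1\to-\tfrac{\alpha}{\beta}u_1$, $u_2\to-\tfrac{\beta}{\alpha}u_2$, which lands exactly on the $H_{IV}$ form. Since item (2) \emph{is} the identification with $H_{IV}$, asserting ``which is $H_{IV}$'' for a form off by these constants leaves the crux unproved. A secondary caution: the transformations involved (a sign flip on one factor only, or output-only rescalings by constants mixing $\alpha$ and $\beta$) are not of the standard parametric Yang--Baxter equivalence form $\phi_a\times\phi_b$, so the inference ``my map is conjugate to $H_{IV}$, hence a Yang--Baxter map'' also needs justification; for maps of the product-preserving shape $u=y\,Q_{a,b}(xy)$, $v=x\,Q_{a,b}(xy)^{-1}$ this can be repaired (e.g.\ via the cocycle property $Q_{a,b}Q_{b,c}=Q_{a,c}$, or by inheriting the Yang--Baxter property from the four-dimensional map \eqref{YB-DNLS_gen} as you sketch), but as written this step is asserted rather than proved.
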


\begin{proof}
\begin{enumerate}
	\item The restriction can be achieved using invariants \eqref{2invs}. In particular, setting $x_1y_1=a$ and $x_2y_2=b$, we can eliminate $y_1$ and $y_2$ from \eqref{YB-DNLS_gen}. The resulting map $x_1\mapsto u_1(x_1,x_2)$, $x_2\mapsto u_2(x_1,x_2)$ is given by
	\begin{equation}
	(x_1,x_2)\overset{Y_{a,b}}{\longrightarrow } (u_1(x_1,x_2),u_2(x_1,x_2))=\left(\frac{a-x_1x_2}{-b+x_1x_2}x_2,\frac{b-x_1x_2}{-a+x_1x_2}x_1 \right).
	\end{equation}
	\item If we first replace the parameters by $a\rightarrow -\frac{1}{\alpha}$ and $b\rightarrow -\frac{1}{\beta}$, and then we replace the dependent variables as $u_1\rightarrow -\frac{\alpha}{\beta}u_1$ and $u_2\rightarrow -\frac{\beta}{\alpha}u_2$, the above map can be written as
	\begin{equation}
	(x_1,x_2)\overset{Y_{a,b}}{\longrightarrow } (u_1(x_1,x_2),u_2(x_1,x_2))=\left(\frac{1+\alpha x_1x_2}{1+\beta x_1x_2}x_2,\frac{1+\beta x_1x_2}{1+\alpha x_1x_2}x_1 \right),
	\end{equation}
	which is the fourth map of the $H$-list \cite{PSTV}.
\end{enumerate}
\end{proof}


\subsection{A non-involutive Yang-Baxter map}
Now, we change $(p,\tilde{q},f;c_1,c_2)\rightarrow (x_1,x_2,x_3;1,1)$ in (\ref{Z2-M-gen}), we define the following map
\begin{equation} \label{Z2-M-gen-Dar}
M(\textbf{x})= \frac{x_1}{\lambda-1} \left(\begin{array}{cc}
x_3 & -x_2 x_3 \\ 1 & -x_2 \end{array} \right) -
\frac{x_1}{\lambda+1}\left(\begin{array}{cc} x_3 & x_2 x_3 \\ -1 & -x_2
\end{array} \right) + \left(\begin{array}{cc} 1 &0 \\ 0 & 1 \end{array}
\right),
\end{equation}
where $\textbf{x}:=(x_1, x_2,x_3)$ and substitute it to the Lax equation
\begin{equation}\label{L-DNLS_gen-2}
M(\textbf{u})M(\textbf{v})=M(\textbf{y})M(\textbf{x}).
\end{equation}

The Lax equation \eqref{L-DNLS_gen-2} implies a system of polynomial equations in the respective arguments which defines the algebraic variety. Here, the algebraic variety is union of two six-dimensional components. The first one is obvious from refactorisation \eqref{L-DNLS_gen} and it corresponds to the permutation map
\begin{equation*}
\textbf{u}\mapsto \textbf{y}, \qquad \textbf{v} \mapsto \textbf{x},
\end{equation*}
while the second one can be represented as the following birational six-dimensional YB map
\begin{subequations}\label{YB-DNLS_gen-2}
\begin{align}
x_1\mapsto u_1 &=\frac{x_1(x_1-x_3-2x_1x_2x_3)+y_1(y_2-x_2+2x_1x_2x_3)}{x_2(1-2x_1x_3)-y_3(1+2y_1y_2)},\\
x_2\mapsto u_2 &=\frac{[x_2(1-2x_1x_3)-y_3(1+2y_1y_2)][x_1x_2(x_2-x_3-2x_1x_2x_3)+y_1y_3(x_2-y_3-2x_1x_2y_2)]}{\left[x_1x_3-x_1x_2(1-2x_1x_2)+y_1y_2-x_2y_1(1-2x_1x_3)\right]\left[y_3-x_2+2x_2y_3(x_1+y_1)\right]},\\
x_3\mapsto u_3 &=y_3,\\
y_1\mapsto v_1 &=\frac{x_1(x_3-y_3-2y_1y_2y_3)+y_1(y_2-y_3-2y_1y_2y_3)}{x_2(1-2x_1x_3)-y_3(1+2y_1y_2)},\\
y_2\mapsto v_2 &=x_2,\\
y_3\mapsto v_3 &=\frac{[x_2(1-2x_1x_3)-y_3(1+2y_1y_2)][x_1x_2(x_2-x_3-2x_1x_2x_3)+y_1y_3(x_2-y_3-2x_1x_2y_2)]}{[x_1(x_3-y_3-2y_1y_2y_3)+y_1(y_2-y_3-2y_1y_2y_3)][x_2-y_3-2x_2y_3(x_1+y_1)]},
\end{align}
\end{subequations}
which is non-involutive.

We now make use of the first integral \eqref{z2-fi}.

\begin{proposition}
Map \eqref{YB-DNLS_gen-2} can be restricted to the following four-dimensional, non-involutive, parametric Yang-Baxter map, $Y\in\End(A_a\times B_b)$ on the invariant leaves $A_a:=\left\{\right (x_1,x_2,x_3)\in K^3;(1+2x_1x_2)$ $\left (1+2x_1x_3)=a\right\}$ and $B_b:=\left\{(y_1,y_2)\in K^3;(1+2y_1y_2)(1+2y_1y_3)=b\right\}$.
	\begin{subequations}\label{YB-DNLS_gen-2-4D}
	\footnotesize
\begin{align}
x_1\mapsto u_1 &=y_1+\frac{(a-b)(1+2x_1x_2)}{b+2(a-2)x_2y_1+2bx_1x_2-(1+2x_1x_2)(1-2y_1y_2)}y_1,\\
x_2\mapsto u_2 &=\frac{[(a-b)x_2+(b-1+2y_1y_2)(1+2x_1x_2)+2(a-2)x_2y_1y_2][b+2(a-2)x_2y_1+2bx_1x_2-(1+2x_1x_2)(1-2y_1y_2)]}{[a+2(a-2)x_2y_1+2ax_1x_2-(1+2x_1x_2)(1-2y_1y_2)][b+2(b-1)x_1x_2+2(b-2)x_2y_1-(1+2x_1x_2)(1-2y_1y_2)]},\\
y_1\mapsto v_1 &=x_1-\frac{(a-b)(1+2x_1x_2)}{b+2(a-2)x_2y_1+2bx_1x_2-(1+2x_1x_2)(1-2y_1y_2)}y_1,\\
y_2\mapsto v_2 &=x_2,
\end{align}
\end{subequations}
Moreover, map \eqref{YB-DNLS_gen-2} has the following invariants
\begin{align*}
I_1 &=x_1+y_1,\\
I_2 &=(1+2x_1x_2)(1+2y_1y_2).
\end{align*}\normalsize
\end{proposition}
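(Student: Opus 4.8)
The plan is to prove the proposition in two parts: first establishing the two claimed invariants of the six-dimensional map \eqref{YB-DNLS_gen-2}, and then using the first integral \eqref{z2-fi} to restrict the map to the invariant leaves and verify the explicit reduced form \eqref{YB-DNLS_gen-2-4D}. For the invariants, I would invoke Proposition 2.1.3 (the generating-function statement): since the map arises from the refactorisation problem $M(\textbf{u})M(\textbf{v})=M(\textbf{y})M(\textbf{x})$ with the Darboux matrix \eqref{Z2-M-gen-Dar}, the trace $\tr(M(\textbf{y})M(\textbf{x}))$ is a generating function of invariants. Expanding this trace as a rational function of $\lambda$ (it has poles at $\lambda=\pm1$ coming from the factors $(\lambda\mp1)^{-1}$) and reading off the coefficients of the independent functions of $\lambda$ should produce the conserved quantities. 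I expect the coefficient structure to yield $I_1=x_1+y_1$ and $I_2=(1+2x_1x_2)(1+2y_1y_2)$ after simplification; these must then be checked to be genuinely invariant under \eqref{YB-DNLS_gen-2}, which in principle reduces to a direct (if tedious) algebraic substitution.

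For the restriction, the key observation is that the first integral \eqref{z2-fi} with $c_1=c_2=1$ becomes $\Phi(1,1)=(2fp+1)(2f\tilde q-1)$, which under the substitution $(p,\tilde q,f)\to(x_1,x_2,x_3)$ and the identification with the map variables corresponds to the quantities $(1+2x_1x_2)(1+2x_1x_3)$ appearing in the definitions of the leaves $A_a$ and $B_b$. I would first verify that these two quantities are preserved by the full six-dimensional map, so that the leaves $A_a\times B_b$ are indeed invariant; this lets one set $(1+2x_1x_2)(1+2x_1x_3)=a$ and $(1+2y_1y_2)(1+2y_1y_3)=b$ as constraints. Using these two relations together with the trivially conserved components $x_3\mapsto y_3$ and $y_2\mapsto x_2$ visible in \eqref{YB-DNLS_gen-2}, one can algebraically eliminate $x_3$ and $y_3$ from the remaining four component maps, producing a four-dimensional parametric map in $(x_1,x_2,y_1,y_2)$ that should match \eqref{YB-DNLS_gen-2-4D} after simplification.

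The plan for checking the Yang-Baxter property of the restricted map, if not inherited automatically, is to note that the restriction of a Yang-Baxter map to invariant leaves compatible with the refactorisation again satisfies the parametric Yang-Baxter equation, so \eqref{YB-DNLS_gen-2-4D} is a parametric Yang-Baxter map once the leaf parameters $a,b$ are correctly threaded through the three-way composition as in \eqref{YB_Ent-ab}. The main obstacle I anticipate is computational rather than conceptual: the elimination of $x_3,y_3$ using the quadratic leaf constraints, and the subsequent simplification of the resulting rational expressions into the compact form displayed in \eqref{YB-DNLS_gen-2-4D}, will require careful handling of the denominators (each of which is the product of two factors that themselves must be shown to coincide with the leaf-adjusted denominators in the original six-dimensional formulas). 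Verifying that the messy numerators and denominators collapse to the stated expressions, and that no spurious branches are introduced when solving the quadratic constraints for $x_3$ and $y_3$, is where the bulk of the genuine work lies; I would organise this by first solving the constraints linearly in $x_3$ (respectively $y_3$), since $a$ and $b$ enter the constraints linearly once one fixes $x_1,x_2$ (respectively $y_1,y_2$), thereby avoiding square roots and keeping the substitution rational.
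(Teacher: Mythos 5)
Your proposal follows essentially the same route as the paper: the authors likewise use the first integral \eqref{z2-fi} (with $c_1=c_2=1$) to identify the leaf quantities $\Psi$ and $\Omega$, verify by direct computation that they are preserved by the six-dimensional map, set $\Psi=a$, $\Omega=b$, eliminate $x_3$ and $y_3$ rationally (solving the constraints linearly, exactly as you propose, so no square roots arise), and then verify the invariance of $I_1$ and $I_2$ by straightforward calculation. Your additional idea of generating $I_1,I_2$ from $\tr(M(\textbf{y})M(\textbf{x}))$ is not a departure either — the paper makes the same observation in the remark immediately following the proposition.
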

\begin{proof}
The existence of the first integral \eqref{z2-fi} indicates that the corresponding quantities $\Psi:=(1+2x_1x_2)(2+x_1x_3)$ and $\Omega:=(1+2y_1y_2)(2+y_1y_3)$ are invariants of map \eqref{YB-DNLS_gen-2}; this can be verified by direct computation. Thus, we set $\Psi=a$, $\Omega=b$ and we eliminate $x_3$ and $y_3$ in \eqref{YB-DNLS_gen-2} by
\begin{equation}
x_3=\frac{a-1+2x_1x_2}{2x_1(1+2x_1x_2)} \qquad y_3=\frac{b-1+2y_1y_2}{2y_1(1+2y_1y_2)}.
\end{equation}
The resulting map can be expressed as \eqref{YB-DNLS_gen-2-4D}.

The invariance under $I_1$ and $I_2$ can be readily verified by straightforward calculation.
\end{proof}

\begin{remark}\normalfont
The Lax representation of map \eqref{YB-DNLS_gen-2-4D} is given by \eqref{L-DNLS_gen-2}, where
\begin{equation*} 
M(\textbf{x})=\frac{1}{\lambda-1} \left(\begin{array}{cc}
\frac{1}{2}\frac{a-1+2x_1x_2}{1+2x_1x_2} & \frac{-1}{2}\frac{a-1+2x_1x_2}{1+2x_1x_2}x_2 \\ x_1 & -x_1x_2 \end{array} \right) -
\frac{1}{\lambda+1}\left(\begin{array}{cc}
\frac{1}{2}\frac{a-1+2x_1x_2}{1+2x_1x_2} & \frac{1}{2}\frac{a-1+2x_1x_2}{1+2x_1x_2}x_2 \\ -x_1 & -x_1x_2 \end{array} \right) + \left(\begin{array}{cc} 1 &0 \\ 0 & 1 \end{array}
\right).
\end{equation*}
The invariants $I_1$ and $I_2$ of map \eqref{YB-DNLS_gen-2-4D} are generated by $\tr(M(\textbf{y})M(\textbf{x}))$.
\end{remark}

\begin{remark}\normalfont
In the derivation of map \eqref{YB-DNLS_gen-2-4D}, we chose $c_1=c_2=1$ in \eqref{Z2-M-gen}. The other option would be to rescale one of $c_i$, $i=1,2$ to 1, since $|c_1|^2+|c_2|^2 \ne 0$. However, the latter choice leads to the permutation map, which is a trivial solution of the Yang-Baxter equation.
\end{remark}


\section{Solutions to the Entwining Yang-Baxter equation}
In this section, we show that matrix refactorisation problems of Darboux matrices give rise to solutions of the parametric entwining Yang-Baxter equation. We also prove that the derived maps are completely integrable.


\subsection{NLS case}
We use the Darboux matrices of the NLS equation, \eqref{M-NLS} and \eqref{M-degen}, and we define the following Lax matrices
\begin{subequations}\label{LaxNLSL123}
\begin{align}
L_1(\textbf{x};a)&:=\lambda \left(\begin{array}{cc}1 & 0\\0 & 0\end{array}\right)+\left(\begin{array}{cc}a+x_1x_2 & x_1\\x_2 & 1 \end{array}\right),\\
L_2(\textbf{x};a)&:=\lambda \left(\begin{array}{cc}1 & 0\\0 & 0\end{array}\right)+\left(\begin{array}{cc}x_2 & x_1\\\frac{a}{x_1} & 0 \end{array}\right),\\
L_3(\textbf{x};a)&\equiv L_1(\textbf{x};a),
\end{align}
\end{subequations}
where $\textbf{x}:=(x_1,x_2)$.

\begin{theorem}
The system of maps ($S_{a,b}$, $R_{a,b}$, $T_{a,b}$), given by
\begin{subequations}
\begin{align}
(\textbf{x},\textbf{y})\overset{S_{a,b}}{\longrightarrow } (\textbf{u}_{s},\textbf{v}_{s})&= \left(\frac{y_1+x_1(y_2-a)}{bx_1}y_1,\frac{b}{y_1},x_1,a+x_1x_2-\frac{y_1}{x_1}\right),\label{S-NLS}\\
(\textbf{x},\textbf{y})\overset{R_{a,b}}{\longrightarrow }(\textbf{u}_{r},\textbf{v}_{r})&=\left(y_1-\frac{a -b}{1+x_1y_2}x_1,y_2,x_1,x_2+\frac{a -b}{1+x_1y_2}y_2\right),\label{Adler-Yamilov}\\
(\textbf{x},\textbf{y})\overset{T_{a,b}}{\longrightarrow } (\textbf{u}_{t},\textbf{v}_{t})&= \left(\frac{a}{y_2},b+y_1y_2-\frac{a}{x_1y_2},x_1,\frac{a+x_1y_2(x_2-b)}{x_1^2y_2}\right),\label{T-NLS}
\end{align}
\end{subequations}
 is a birational, parametric, entwining Yang-Baxter map with strong Lax triple ($L_1,L_2,L_3$) given by \eqref{LaxNLSL123}.
\end{theorem}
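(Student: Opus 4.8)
The plan is to verify that the triple $(L_1,L_2,L_3)$ given in \eqref{LaxNLSL123} is a \emph{strong} Lax triple for the three maps, and then to invoke the Kouloukas--Papageorgiou criterion (Proposition, matrix trifactorisation) to conclude the entwining Yang-Baxter property. Since $L_3\equiv L_1$, there are really only two distinct matrices in play, and the three refactorisation problems \eqref{L12}--\eqref{L23} specialise as follows: the relation $L_1(\mathbf{u}_s;a)L_2(\mathbf{v}_s;b)=L_2(\mathbf{y};b)L_1(\mathbf{x};a)$ must reproduce $S_{a,b}$, the relation $L_1(\mathbf{u}_r;a)L_1(\mathbf{v}_r;b)=L_1(\mathbf{y};b)L_1(\mathbf{x};a)$ must reproduce $R_{a,b}$, and $L_2(\mathbf{u}_t;a)L_1(\mathbf{v}_t;b)=L_1(\mathbf{y};b)L_2(\mathbf{x};a)$ must reproduce $T_{a,b}$.

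First I would treat each refactorisation separately. For each of the three, I multiply out the two $2\times2$ products, which are affine in $\lambda$ (the left factor has a $\lambda$-term only in the $(1,1)$ entry, so each product is at most quadratic in $\lambda$ but with the $\lambda^2$ coefficient the same on both sides). Matching the coefficients of $\lambda^2,\lambda^1,\lambda^0$ entrywise yields an overdetermined polynomial system in the unknown output variables $(\mathbf{u},\mathbf{v})$ with the inputs $(\mathbf{x},\mathbf{y})$ and parameters $a,b$ as data. The task is then to solve this system explicitly and check that the unique nontrivial branch agrees with the stated formulas \eqref{S-NLS}, \eqref{Adler-Yamilov}, \eqref{T-NLS}. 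The case of $R_{a,b}$ is the classical NLS refactorisation and is known to give the Adler--Yamilov map, so I expect it to reproduce \eqref{Adler-Yamilov} directly; the genuinely new computations are the mixed refactorisations for $S_{a,b}$ and $T_{a,b}$, where $L_2$ (the degenerate Darboux matrix) appears. Establishing that each system \emph{implies} precisely the listed map — not merely that the listed map is a solution — is what upgrades the triple to a \emph{strong} Lax triple, as required by the statement.

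For the entwining property itself, I would invoke the cited Proposition: it suffices to show that the trifactorisation $L_1(\mathbf{x};a)L_2(\mathbf{y};b)L_3(\mathbf{z};c)=L_1(\mathbf{x}';a)L_2(\mathbf{y}';b)L_3(\mathbf{z}';c)$ forces $\mathbf{x}=\mathbf{x}'$, $\mathbf{y}=\mathbf{y}'$, $\mathbf{z}=\mathbf{z}'$. Concretely, since $L_3\equiv L_1$, I would compute the product $L_1(\mathbf{x};a)L_2(\mathbf{y};b)L_1(\mathbf{z};c)$ as a matrix polynomial in $\lambda$ (degree three, with leading coefficient independent of the variables) and read off enough $\lambda$-coefficients of the matrix entries to recover $x_1,x_2,y_1,y_2,z_1,z_2$ as rational functions of the entries. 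If these reconstruction formulas exist, then equality of the two trifactorisations immediately gives equality of the six variables, discharging the hypothesis of the Proposition and hence the entwining Yang-Baxter equation \eqref{YB_Ent-ab}.

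Finally, birationality must be checked map-by-map, as flagged in the second Remark: because the mixed refactorisations break the symmetry of \eqref{Lax-eq}, birationality is not automatic. For each of $S_{a,b}$, $R_{a,b}$, $T_{a,b}$ I would exhibit the inverse explicitly — most cleanly by solving the corresponding refactorisation for $(\mathbf{x},\mathbf{y})$ in terms of $(\mathbf{u},\mathbf{v})$ — and confirm that the inverse is again rational. The main obstacle I anticipate is the bookkeeping in the two mixed refactorisations: the presence of the $\tfrac{a}{x_1}$ and $\tfrac{b}{y_1}$ entries in $L_2$ produces denominators, so I must track carefully that the polynomial coefficient-matching has a \emph{unique} admissible solution (discarding the trivial permutation-type branch) and that no spurious factors are introduced. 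Verifying uniqueness on the correct branch — rather than just exhibiting the formulas as one solution — is where most of the care will be needed, but the computations themselves are elementary linear algebra in $\mathfrak{sl}_2[\lambda]$.
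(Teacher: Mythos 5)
Your proposal follows essentially the same route as the paper's proof: derive each of $S_{a,b}$, $R_{a,b}$, $T_{a,b}$ as the unique solution of its respective refactorisation problem (establishing the strong Lax triple), invoke the Kouloukas--Papageorgiou trifactorisation criterion to obtain the entwining Yang--Baxter property, and verify birationality map-by-map. The only cosmetic difference is that the paper obtains birationality of $R_{a,b}$ for free from the symmetric refactorisation \eqref{L1L3-NLS} (the classical Lax equation case), whereas you propose to exhibit all three inverses explicitly; both are valid.
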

\begin{proof}
We first consider the following matrix refactorisation problem
\begin{equation*}
L_1(\textbf{u}_s;a)L_2(\textbf{v}_s;b)=L_2(\textbf{y};b)L_1(\textbf{x};a).
\end{equation*}
This implies a system of polynomial equations in $\textbf{u}_s, \textbf{v}_s,\textbf{x},\textbf{y}$ which admits a unique solution for $\textbf{u}_s$ and $\textbf{v}_s$, namely the map $(u_s,v_s)=S_{a,b}(x,y)$ given by \eqref{S-NLS}. Moreover, from the following refactorisation problem
\begin{equation}\label{L1L3-NLS}
L_1(\textbf{u}_r;a)L_3(\textbf{v}_r;b)=L_3(\textbf{y};b)L_1(\textbf{x};a),
\end{equation}
follows a system of polynomial equations in $\textbf{u}_r, \textbf{v}_r,\textbf{x},\textbf{y}$ which uniquely defines the map $(u_r,v_r)=R_{a,b}(x,y)$ given by \eqref{Adler-Yamilov}. Finally, the refactorisation problem between matrices $L_2$ and $L_3$,
\begin{equation*}
L_2(\textbf{u}_t;a)L_3(\textbf{v}_t;b)=L_3(\textbf{y};b)L_2(\textbf{x};a),
\end{equation*}
is equivalent to the map $(u_t,v_t)=T_{a,b}(x,y)$ given by \eqref{T-NLS}.

It can be verified that the system of polynomial equations in $\textbf{u},\textbf{v},\textbf{w},\textbf{x},\textbf{y},\textbf{z}$ that follows from the matrix trifactorisation problem $L_1(\textbf{x};a)L_2(\textbf{y};b)L_3(\textbf{z},c)=L_1(\textbf{x}';a)L_2(\textbf{y}';b)
L_3(\textbf{z}',c)$ admits only the trivial solution $\textbf{x}=\textbf{x}', \textbf{y}=\textbf{y}', \textbf{z}= \textbf{z}'$. Therefore, the system of maps (\eqref{S-NLS},\eqref{Adler-Yamilov}, \eqref{T-NLS}) satisfies the parametric entwining Yang-Baxter equation.

The birationality of $R$ follows from the properties of \eqref{L1L3-NLS}. For the maps $S$ and $T$, birationality can be verified by direct calculation. 
\end{proof}

\begin{remark}\normalfont
Map \eqref{Adler-Yamilov} is the famous Adler-Yamilov Yang-Baxter map \cite{Adler-Yamilov,  Kouloukas, PT} which is completely integrable. 
\end{remark}

For the integrability of maps $S$ and $T$ we have the following.

\begin{proposition}
Map $S$, given by \eqref{S-NLS}, is Liouville integrable.
\end{proposition}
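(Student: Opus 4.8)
The plan is to establish Liouville integrability of the four-dimensional map $S_{a,b}$ by exhibiting two functionally independent invariants in involution with respect to an invariant Poisson bracket, with $S$ being a Poisson map for that bracket. By the general Proposition on invariants proved earlier in the excerpt, the generating function $\tr(L_2(\textbf{y};b)L_1(\textbf{x};a))$ produces invariants of $S_{a,b}$. First I would compute this trace explicitly: since $L_1$ and $L_2$ are affine in $\lambda$ with $L_1$ having leading matrix $\diag(1,0)$ and $L_2$ the same, the product is quadratic in $\lambda$, so expanding $\tr(L_2(\textbf{y};b)L_1(\textbf{x};a))=\sum_k I_k\lambda^k$ yields a small finite list of candidate invariants $I_0,I_1,I_2$. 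I expect $I_2$ to be a trivial constant (coming from the product of the two rank-one leading terms), leaving essentially two nontrivial invariants, say $I_0(\textbf{x},\textbf{y};a,b)$ and $I_1(\textbf{x},\textbf{y};a,b)$, which I would record explicitly and then confirm are preserved by $S_{a,b}$ via direct substitution of \eqref{S-NLS}.

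Next I would produce an invariant Poisson bracket. The natural guess, by analogy with the bracket used for the involutive map \eqref{YB-DNLS_gen} and with the NLS/Adler--Yamilov structures referenced in the paper, is a quadratic (log-canonical) bracket among the four coordinates $x_1,x_2,y_1,y_2$; I would posit such a bracket with undetermined structure constants and impose two requirements: that the two invariants $I_0,I_1$ Poisson-commute, $\{I_0,I_1\}=0$, and that $S_{a,b}$ be a Poisson map, i.e. $\{f\circ S, g\circ S\}=\{f,g\}\circ S$ for coordinate functions $f,g$. Solving these linear constraints on the structure constants should pin down an admissible bracket. Once the bracket is fixed, involutivity and the Poisson property are checked by direct (if tedious) computation.

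The remaining ingredient is functional independence of the two invariants, which I would verify by checking that the Jacobian matrix $\partial(I_0,I_1)/\partial(x_1,x_2,y_1,y_2)$ has rank two on a generic point, equivalently that $dI_0\wedge dI_1\neq 0$ generically. With two functionally independent invariants in involution, a dimension count ($2N=4$, $N=2$) and the Poisson property of $S_{a,b}$ together give Liouville integrability in the sense defined in the Liouville integrability subsection.

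The main obstacle I anticipate is not the trace computation but finding the correct invariant Poisson bracket: unlike the symmetric Lax equation \eqref{Lax-eq}, the entwining refactorisation \eqref{L12} breaks the $(u,v;a,b)\leftrightarrow(y,x;b,a)$ symmetry (as the second Remark stresses), so there is no guarantee that a single natural bracket makes $S$ Poisson, and the ansatz may need to be enlarged or the bracket may be degenerate with Casimir functions that must themselves be shown invariant. I would therefore be prepared to treat a possibly degenerate Poisson structure, identifying its Casimirs and confirming they are preserved by $S_{a,b}$, exactly as was done in the Remark following Proposition for the map \eqref{YB-DNLS_gen}.
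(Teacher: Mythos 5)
Your proposal is correct and takes essentially the same approach as the paper, which obtains the invariants $I_1=x_1x_2+y_2$ and $I_2=x_1x_2y_2+x_2y_1+ay_2+b\frac{x_1}{y_1}$ from precisely the generating function $\tr(L_2(\textbf{y};b)L_1(\textbf{x};a))$ that you identify (its $\lambda^2$ coefficient is the trivial constant you predict), and then exhibits an invariant Poisson bracket of full rank with respect to which these invariants are in involution and $S$ is a Poisson map. The one caveat is that the bracket that actually works, $\{x_1,x_2\}=1$, $\{y_1,y_2\}=y_1$, all $\{x_i,y_j\}=0$, is constant/linear rather than log-canonical, so your initial quadratic ansatz would fail and the enlargement you anticipate is genuinely needed; since this bracket is nondegenerate (rank $4$), the fallback Casimir analysis you prepare for does not arise.
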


\begin{proof}
From the trace of the monodromy matrix, $M(\textbf{x},\textbf{y})=L_3(\textbf{y};b)L_1(\textbf{x};a)$, we obtain the following invariants
\begin{equation}
I_1=x_1x_2+y_2, \qquad I_2=x_1x_2y_2+x_2y_1+ay_2+b\frac{x_1}{y_1}.
\end{equation}
It can be readily verified that these invariants are in involution with respect to the following Poisson bracket 
\begin{equation}
\label{eq:Poisson-1}
\left\{x_1,x_2\right\}=1,\quad \left\{y_1,y_2\right\}=y_1\qquad \text{and all the rest}
\qquad \left\{x_i,y_j\right\}=0.
\end{equation}
The rank of the associated Poisson matrix is 4 and it is being preserved by map $S$. 
\end{proof}

\begin{proposition}
Map $T$, given by \eqref{T-NLS}, is Liouville integrable.
\end{proposition}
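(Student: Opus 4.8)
The plan is to follow the same route as for map $S$: extract the invariants from the monodromy matrix via the invariants Proposition, and then exhibit an invariant Poisson bracket on $K^4$ with respect to which they Poisson-commute. Since the refactorisation problem defining $T$ is $L_2(\textbf{u}_t;a)L_3(\textbf{v}_t;b)=L_3(\textbf{y};b)L_2(\textbf{x};a)$, the relevant monodromy matrix is $M(\textbf{x},\textbf{y})=L_3(\textbf{y};b)L_2(\textbf{x};a)$ with $L_3\equiv L_1$. First I would form this product explicitly and expand $\tr M(\textbf{x},\textbf{y})$ in powers of $\lambda$; because the $(2,2)$-entry of $L_2$ vanishes, the trace is quadratic, $\tr M=\lambda^2+I_1\lambda+I_2$ up to an additive constant, giving the candidate invariants $I_1=x_2+y_1y_2$ and $I_2=bx_2+x_2y_1y_2+\tfrac{a y_1}{x_1}+x_1y_2$. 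These are functionally independent ($I_2$ depends on $x_1$ while $I_1$ does not), and their invariance under $T$ is guaranteed by the Proposition; as a consistency check I would confirm $I_1$ directly through the cancellation $u_2+v_1v_2=x_2+y_1y_2$ on substituting \eqref{T-NLS}.

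For the symplectic structure, the guiding principle is the analogy with $S$, where the NLS (that is $L_1$-type) variables carried the canonical bracket and the degenerate $L_2$-type variables carried a scaled one. In the present monodromy the roles are swapped, since $\textbf{y}$ feeds $L_1$ and $\textbf{x}$ feeds $L_2$; I would therefore propose $\{x_1,x_2\}=x_1$, $\{y_1,y_2\}=1$ and $\{x_i,y_j\}=0$. A short bidifferential computation then yields $\{I_1,I_2\}=0$: the only surviving Leibniz contributions are $\pm a y_1/x_1$ and $\mp x_1 y_2$, which cancel pairwise. The associated Poisson matrix is block diagonal with blocks $\left(\begin{smallmatrix}0 & x_1\\ -x_1 & 0\end{smallmatrix}\right)$ and $\left(\begin{smallmatrix}0 & 1\\ -1 & 0\end{smallmatrix}\right)$, hence of full rank $4$ wherever $x_1\neq 0$.

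It remains to prove that $T$ is a Poisson map for this bracket, i.e. $\{f\circ T,g\circ T\}=\{f,g\}\circ T$ on the coordinate functions, and this is the step I expect to be the main obstacle. Concretely one must compute the six pairwise brackets of the components $(u_1,u_2,v_1,v_2)$ of \eqref{T-NLS} in terms of $\textbf{x},\textbf{y}$ and match them against the transported relations, namely $\{u_1,u_2\}=u_1$, $\{v_1,v_2\}=1$ and all mixed brackets zero; the denominators $y_2$, $x_1y_2$ and $x_1^2y_2$ make the algebra heavier than for $S$, so I would organise the check by splitting each image into its monomial summands and using that any two functions of $(x_1,y_2)$ alone Poisson-commute, which kills most terms. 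Once the Poisson property is confirmed, the two functionally independent invariants $I_1,I_2$ in involution, together with the invariant non-degenerate bracket of rank $4$ preserved by the map, give complete integrability in the sense of Section~2.2, completing the proof.
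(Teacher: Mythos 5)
Your proposal is correct and takes essentially the same approach as the paper: identical invariants $I_1=x_2+y_1y_2$, $I_2=x_2y_1y_2+x_1y_2+bx_2+a\frac{y_1}{x_1}$, the identical Poisson bracket $\{x_1,x_2\}=x_1$, $\{y_1,y_2\}=1$, $\{x_i,y_j\}=0$, followed by the involution check, the rank-$4$ nondegeneracy, and the Poisson property of $T$ (which the paper merely asserts with ``all the rest of conditions are satisfied'', and which indeed goes through as you outline). You also correctly take the generating function to be $\tr\bigl(L_3(\textbf{y};b)L_2(\textbf{x};a)\bigr)$, in line with the invariants Proposition, whereas the paper's proof writes $L_1$ in place of $L_2$ --- an apparent typo, since only the $L_3 L_2$ trace reproduces the stated invariants.
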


\begin{proof}
The quantity $\tr (L_3(\textbf{y};b)L_1(\textbf{x};a))$ generates for map $T$ the following invariants
\begin{equation}
I_1=x_2+y_1y_2, \qquad I_2=x_2y_1y_2+x_1y_2+bx_2+a\frac{y_1}{x_1},
\end{equation}
which are in involution with respect to the following Poisson bracket
\begin{equation}
\label{eq:Poisson-2}
\left\{x_1,x_2\right\}=x_1,\quad \left\{y_1,y_2\right\}=1\qquad \text{and all the rest}
\qquad \left\{x_i,y_j\right\}=0,
\end{equation}
and all the rest of conditions for complete integrability are satisfied.
\end{proof}

Since the Poisson matrices associated to \eqref{eq:Poisson-1} and \eqref{eq:Poisson-2} have full rank they define symplectic 2-forms making the corresponding maps $S$ and $T$ birational symplectic maps. The symplectic 2-forms
$$
\omega_S=dx_1\wedge dx_2+\frac{1}{y_1}dy_1\wedge dy_2,
$$ 
and
$$
\omega_T=\frac{1}{x_1}dx_1\wedge dx_2+dy_1\wedge dy_2,
$$
are associated with \eqref{eq:Poisson-1} and \eqref{eq:Poisson-2}, respectively. It can be verified that
$$
S^*\omega_S=\omega_S \quad T^*\omega_T=\omega_T,
$$
and thus it follows that the maps $S$ and $T$ are also measure preserving maps since they preserve the volume forms $\omega_S\wedge\omega_S$ and $\omega_T\wedge\omega_T$, respectively.

\subsection{DNLS case}
Now, according to \eqref{DT-sl2-gen} and \eqref{DT-sl2-degen} we define a Lax triple. In particular, we set $(x_1,x_2):=(fp,f\tilde{q})$ in \eqref{DT-sl2-gen} and $(x_1,x_2):=(f,fp)$ in \eqref{DT-sl2-degen}
\begin{subequations}\label{LaxDNLSL123}
\begin{align}
L_1(\textbf{x};a)&:=\lambda^{2}\left(\begin{array}{cc} a+x_1x_2 & 0\\ 0 & 0\end{array}\right)+\lambda\left(\begin{array}{cc} 0 & x_1\\ x_2 & 0\end{array}\right)+\left(\begin{array}{cc} 1 & 0\\ 0 & 1 \end{array}\right),\\
L_2(\textbf{x};a)&:=\lambda^{2}\left(\begin{array}{cc} x_1 & 0\\ 0 & 0\end{array}\right)+\lambda\left(\begin{array}{cc} 0 & x_2\\ \frac{k}{x_2} & 0\end{array}\right)+\left(\begin{array}{cc} 1 & 0\\ 0 & 0 \end{array}\right)\\
L_3(\textbf{x};a)&\equiv L_1(\textbf{x};a),
\end{align}
\end{subequations}
where $\textbf{x}:=(x_1,x_2)$.

\begin{theorem}
The system of maps ($S_{a,b}$, $R_{a,b}$, $T_{a,b}$), given by
\begin{subequations}
\begin{align}
(\textbf{x},\textbf{y})\overset{S_{a,b}}{\longrightarrow } (\textbf{u}_{s},\textbf{v}_{s})&= \left(\frac{x_1y_1-a(x_1+y_2)}{bx_1}y_2,\frac{bx_1}{y_2(x_1+y_2)},\frac{(x_1+y_2)(a+x_1x_2)}{x_1},x_1+y_2\right),\label{S-DNLS}\\
(\textbf{x},\textbf{y})\overset{R_{a,b}}{\longrightarrow }(\textbf{u}_{r},\textbf{v}_{r})&=\left(y_1+\frac{a-b}{a-x_1y_2}x_1,\frac{a-x_1y_2}{b-x_1y_2}y_2,\frac{b-x_1y_2}{a-x_1y_2}x_1,x_2+\frac{b-a}{b-x_1y_2}y_2\right),\label{YB-DNLS}\\
(\textbf{x},\textbf{y})\overset{T_{a,b}}{\longrightarrow } (\textbf{u}_{t},\textbf{v}_{t})&= \left(\frac{(a+x_2y_2)(b+y_1y_2)}{x_2y_2},\frac{ax_2}{a+x_2y_2},\frac{x_2^2y_2}{a+x_2y_2},\frac{x_1x_2y_2-b(a+x_2y_2)}{x_2^2y_2}\right),\label{T-DNLS}
\end{align}
\end{subequations}
 is a birational parametric, entwining Yang-Baxter map with strong Lax triple ($L_1,L_2,L_3$) given by \eqref{LaxDNLSL123}.
\end{theorem}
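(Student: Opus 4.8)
The plan is to follow exactly the template established in the proof of the NLS case (Theorem in the previous subsection), since the structure of the claim is identical. First I would establish that each of the three maps arises from its corresponding pairwise refactorisation problem. Specifically, I would substitute the explicit Lax matrices \eqref{LaxDNLSL123} into the three matrix equations
\begin{equation*}
L_1(\textbf{u}_s;a)L_2(\textbf{v}_s;b)=L_2(\textbf{y};b)L_1(\textbf{x};a),\quad
L_1(\textbf{u}_r;a)L_3(\textbf{v}_r;b)=L_3(\textbf{y};b)L_1(\textbf{x};a),\quad
L_2(\textbf{u}_t;a)L_3(\textbf{v}_t;b)=L_3(\textbf{y};b)L_2(\textbf{x};a),
\end{equation*}
and in each case expand both sides as polynomials in the spectral parameter $\lambda$. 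Matching coefficients of each power $\lambda^k$ yields a system of polynomial constraints in the entries of $\textbf{u}_\bullet,\textbf{v}_\bullet,\textbf{x},\textbf{y}$; solving these systems should reproduce \eqref{S-DNLS}, \eqref{YB-DNLS} and \eqref{T-DNLS} respectively as the unique (nontrivial) solutions, thereby showing that the triple is a \emph{strong} Lax triple in the sense defined in the Preliminaries.

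Next I would verify the entwining Yang-Baxter property. Rather than composing the three rational maps directly (which, as the text notes, is computationally unwieldy), I would invoke the Kouloukas--Papageorgiou criterion (Proposition in the Preliminaries): it suffices to show that the matrix trifactorisation
\begin{equation*}
L_1(\textbf{x};a)L_2(\textbf{y};b)L_3(\textbf{z};c)=L_1(\textbf{x}';a)L_2(\textbf{y}';b)L_3(\textbf{z}';c)
\end{equation*}
forces $\textbf{x}=\textbf{x}'$, $\textbf{y}=\textbf{y}'$, $\textbf{z}=\textbf{z}'$. I would again expand both products as matrix polynomials in $\lambda$ (now of higher degree, since the DNLS matrices are quadratic in $\lambda$), equate coefficients, and argue that the resulting polynomial system admits only the trivial solution. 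This reduces the entwining property to a purely algebraic uniqueness statement and avoids any composition of rational maps.

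Finally, birationality must be addressed for each map individually, because (as the Remark in the Preliminaries stresses) entwining maps from refactorisation problems are not automatically birational. For the middle map $R_{a,b}$, birationality should follow from the symmetry of problem \eqref{L13} under the involution $L_1=L_3$, exactly as in the NLS case; for $S_{a,b}$ and $T_{a,b}$ one would invert the explicit rational formulae \eqref{S-DNLS} and \eqref{T-DNLS} by direct computation and check that the inverses are again rational.

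The main obstacle I anticipate is the trifactorisation uniqueness step. Because $L_2$ contains the entry $k/x_2$, the products are genuinely rational (not polynomial) in the variables, and the presence of the constant $k$ together with the degenerate, non-invertible constant terms of $L_2$ and $L_3$ (the lower-right blocks vanish) means the coefficient-matching system is both larger and more delicate than in the NLS case; one must take care that clearing denominators does not introduce spurious solutions, and that the nongeneric structure of $L_2$ does not cause a loss of information in some $\lambda$-coefficient. The remaining steps are essentially routine symbolic algebra, most conveniently carried out with a computer algebra system.
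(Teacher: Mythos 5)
Your proposal is correct and follows essentially the same route as the paper's proof: derive each of \eqref{S-DNLS}, \eqref{YB-DNLS}, \eqref{T-DNLS} as the unique solution of its pairwise refactorisation problem, establish the entwining property via the Kouloukas--Papageorgiou trifactorisation uniqueness criterion, and verify birationality map-by-map (for $R_{a,b}$ from the symmetric structure of the $L_1$--$L_3$ problem, for $S_{a,b}$ and $T_{a,b}$ by direct computation). The paper's proof is exactly this argument, with the algebraic verifications asserted rather than displayed.
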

\begin{proof}
The following matrix refactorisation problem
\begin{equation}\label{L1L2-DNLS}
L_1(\textbf{u}_s;a)L_2(\textbf{v}_s;b)=L_2(\textbf{y};b)L_1(\textbf{x};a),
\end{equation}
implies a system of polynomial equations in $\textbf{u}_s, \textbf{v}_s,\textbf{x},\textbf{y}$ which admits a unique solution for $\textbf{u}_s$ and $\textbf{v}_s$, namely the map $(u_s,v_s)=S_{a,b}(x,y)$ given by \eqref{S-DNLS}. We, then, consider the following refactorisation problem
\begin{equation}\label{L1L3-DNLS}
L_1(\textbf{u}_r;a)L_3(\textbf{v}_r;b)=L_3(\textbf{y};b)L_1(\textbf{x};a).
\end{equation}
This implies a system of polynomial equations in $\textbf{u}_r, \textbf{v}_r,\textbf{x},\textbf{y}$ which uniquely defines the map $(u_r,v_r)=R_{a,b}(x,y)$ given by \eqref{YB-DNLS}. Finally, the refactorisation problem between matrices $L_2$ and $L_3$,
\begin{equation}\label{L2L3-DNLS}
L_2(\textbf{u}_t;a)L_3(\textbf{v}_t;b)=L_3(\textbf{y};b)L_2(\textbf{x};a),
\end{equation}
is equivalent to the map $(u_t,v_t)=T_{a,b}(x,y)$ given by \eqref{T-DNLS}.

It can be verified that the system of polynomial equations in $\textbf{u},\textbf{v},\textbf{w},\textbf{x},\textbf{y},\textbf{z}$ that follows from the matrix trifactorisation problem $L_1(\textbf{x};a)L_2(\textbf{y};b)L_3(\textbf{z},c)=L_1(\textbf{x}';a)L_2(\textbf{y}';b)
L_3(\textbf{z}',c)$ admits only the trivial solution $\textbf{x}=\textbf{x}',\textbf{y}=\textbf{y}',\textbf{z}=\textbf{z}'$. Therefore, the system of maps (\eqref{S-DNLS},\eqref{YB-DNLS}, \eqref{T-DNLS}) satisfies the parametric entwining Yang-Baxter equation.

From the properties of \eqref{L1L3-DNLS} it follows that map $R$ is birational. The birationality of maps $S$ and $T$ can be directly verified. 
\end{proof}

\begin{remark}\normalfont
Map \eqref{YB-DNLS} was derived in \cite{Sokor-Sasha} and it is, itself, a completely integrable, parametric Yang-Baxter map.
\end{remark}

\begin{proposition} 
Map $S$, given by \eqref{S-DNLS}, is completely integrable.
\end{proposition}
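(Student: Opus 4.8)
The plan is to mirror the argument used for the NLS maps $S$ and $T$ above, since the triple $(L_1,L_2,L_3)$ in \eqref{LaxDNLSL123} is a strong Lax triple. By the Proposition on invariants, the generating function of invariants for $S$ is the trace of the monodromy matrix $M(\textbf{x},\textbf{y})=L_2(\textbf{y};b)L_1(\textbf{x};a)$. First I would form this product explicitly and compute $\tr M$, which is a polynomial in the spectral parameter $\lambda$ (of degree four, since each factor is quadratic in $\lambda$). Expanding $\tr\left(L_2(\textbf{y};b)L_1(\textbf{x};a)\right)=\sum_k I_k(\textbf{x},\textbf{y};a,b)\,\lambda^k$, each coefficient $I_k$ is an invariant of $S$ by that Proposition; from these I would select two, say $I_1$ and $I_2$, that are functionally independent.

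Since $S$ acts on the four-dimensional space with coordinates $(x_1,x_2,y_1,y_2)$, Liouville integrability as defined above requires exactly $N=2$ invariants in involution with respect to an invariant Poisson bracket, together with the Poisson property of the map. I would next produce a candidate Poisson bracket and verify $\{I_1,I_2\}=0$ directly. Guided by the involutive DNLS map of Section 3 (whose bracket is multiplicative, e.g.\ $\{x_1,x_2\}=-x_1x_2$) rather than by the additive brackets \eqref{eq:Poisson-1}--\eqref{eq:Poisson-2} of the NLS case, I expect the appropriate structure here to involve quadratic terms in the coordinates; the brackets among the pairs $(x_1,x_2)$, $(y_1,y_2)$ and the cross-pairs must be fixed so that involution holds and the bracket is non-degenerate.

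The remaining, and most demanding, step is to confirm that $S$ is a Poisson map for this bracket, i.e.\ that $\{\,\cdot\,,\cdot\,\}$ is invariant under \eqref{S-DNLS}. Concretely this means substituting the rational components $\textbf{u}_s,\textbf{v}_s$ of \eqref{S-DNLS} into the bracket relations and checking $\{f\circ S,g\circ S\}=\{f,g\}\circ S$ on the coordinate functions. Because the components of $S$ are genuinely rational and not merely linear or monomial, this verification is where the bulk of the computation lies, and it is the step I expect to be the main obstacle; it is, however, a finite and purely algebraic check, best carried out with computer algebra. Finally I would record that the differentials $dI_1$ and $dI_2$ are linearly independent on a dense open set, so the invariants are functionally independent, and that the associated Poisson matrix has full rank $4$ and is preserved by $S$, which together complete the verification of complete integrability.
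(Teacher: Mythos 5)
Your plan is correct and essentially identical to the paper's proof: the paper extracts the invariants $I_1=y_1(x_1x_2+a)$ and $I_2=x_1x_2+y_1+x_2y_2+b\frac{x_1}{y_2}$ from the $\lambda^4$ and $\lambda^2$ coefficients of $\tr\left(L_2(\textbf{y};b)L_1(\textbf{x};a)\right)$, and then verifies involution, the Poisson property of $S$, and rank $4$ for exactly the kind of quadratic bracket you anticipate, namely $\{x_1,x_2\}=x_1x_2+a$, $\{x_1,y_2\}=x_1y_2$, $\{x_2,y_2\}=-x_2y_2$, $\{y_1,y_2\}=-y_1y_2$, $\{x_1,y_1\}=\{x_2,y_1\}=0$. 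One small point in your favour: the paper's proof nominally writes the monodromy matrix as $L_3(\textbf{y};b)L_1(\textbf{x};a)$, but the invariants it lists are in fact those generated by $\tr\left(L_2(\textbf{y};b)L_1(\textbf{x};a)\right)$ --- your choice, which is the one dictated by the generating-function proposition for the map $S_{a,b}$.
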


\begin{proof}
The trace of the monodromy matrix, $M(\textbf{x},\textbf{y})=L_3(\textbf{y};b)L_1(\textbf{x};a)$, implies the following invariants
\begin{equation*}
I_1=y_1 (x_1x_2+a), \qquad I_2=x_1x_2+y_1+x_2y_2+b\frac{x_1}{y_2}.
\end{equation*}
These invariants are in involution with respect to the following Poisson bracket 
\begin{equation}\label{Poisson-SDNLS}
\left\{x_1,x_2\right\}=x_1x_2+a,\quad \left\{x_1,y_2\right\}=x_1y_2,\quad \left\{x_2,y_2\right\}=-x_2y_2,\quad\left\{y_1,y_2\right\}=-y_1y_2,
\end{equation}
and the rest $\left\{x_1,y_1\right\}=\left\{x_2,y_1\right\}=0$. The rank of the associated Poisson matrix is 4 and it is being preserved by map $S$. 
\end{proof}

\begin{proposition}
Map $T$, given by \eqref{T-DNLS}, is completely integrable.
\end{proposition}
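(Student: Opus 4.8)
The plan is to mirror exactly the structure used in the preceding propositions for the maps $S$ (in both the NLS and DNLS cases) and for $T$ in the NLS case. By the general Proposition on invariants (the one stating that $\tr(L_3(\textbf{y};b)L_2(\textbf{x};a))$ is a generating function of invariants for $T_{a,b}$), the conserved quantities of $T$ should be read off from the trace of the monodromy matrix $M(\textbf{x},\textbf{y})=L_3(\textbf{y};b)L_2(\textbf{x};a)$. First I would compute this matrix product explicitly using the Lax matrices $L_3\equiv L_1$ and $L_2$ from \eqref{LaxDNLSL123}, expand $\tr M(\textbf{x},\textbf{y})$ as a polynomial in the spectral parameter $\lambda$, and collect the $\lambda$-coefficients to extract two functionally independent invariants $I_1,I_2$. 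Since the map is four-dimensional ($\textbf{x},\textbf{y}\in V$ with $\dim V=2$), Liouville integrability in the sense defined in Section 2.2 requires exactly $N=2$ independent invariants in involution together with a compatible invariant Poisson structure.

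The key steps, in order, are: (i) form $M=L_3(\textbf{y};b)L_2(\textbf{x};a)$ and read off $I_1,I_2$ from the coefficients of $\tr M$ in $\lambda$; (ii) exhibit an explicit Poisson bracket on the coordinates $(x_1,x_2,y_1,y_2)$ — by analogy with \eqref{Poisson-SDNLS} I would expect brackets that are quadratic (or shifted-quadratic) in the variables and possibly involve the parameters — and verify that the associated Poisson matrix has full rank $4$; (iii) check the involutivity $\{I_1,I_2\}=0$ by direct substitution; and (iv) confirm that the bracket is invariant under $T$, i.e. that $T$ is a Poisson map for this bracket, equivalently that the Poisson matrix is preserved. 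Because Proposition on invariants already guarantees $I_1,I_2$ are conserved by $T$, the genuine content is the compatibility of the invariants with a Poisson structure that $T$ preserves.

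The main obstacle I expect is not finding the invariants — those come almost mechanically from $\tr M$ — but rather identifying the correct invariant Poisson bracket and verifying simultaneously that $\{I_1,I_2\}=0$ and that $T$ is Poisson with respect to it. There is no algorithmic shortcut for guessing the bracket; one typically makes an ansatz (e.g. all brackets quadratic in the variables with parameter-dependent coefficients, as in \eqref{Poisson-SDNLS}) and solves the resulting linear system imposed by the two conditions. The involutivity check $\{I_1,I_2\}=0$ and the Poisson-map condition $T^*\{\cdot,\cdot\}=\{\cdot,\cdot\}$ are both direct but tedious algebraic computations; I would carry them out by symbolic computation rather than by hand. Once a full-rank bracket satisfying both conditions is exhibited, all the requirements of the Liouville-integrability definition in Section 2.2 are met, and the proof closes exactly as in the $S$-case.
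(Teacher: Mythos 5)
Your proposal is correct and follows essentially the same route as the paper's proof, which reads the invariants $I_1=x_1(y_1y_2+b)$ and $I_2=x_1+y_1y_2+x_2y_2+a\,y_1/x_2$ off the trace of the monodromy matrix and then exhibits precisely the kind of shifted-quadratic, rank-$4$ Poisson bracket your ansatz anticipates, namely $\{x_1,x_2\}=-x_1x_2$, $\{x_2,y_1\}=x_2y_1$, $\{x_2,y_2\}=-x_2y_2$, $\{y_1,y_2\}=y_1y_2+b$, $\{x_1,y_1\}=\{x_1,y_2\}=0$, asserting involutivity, full rank and preservation of the bracket by $T$ exactly as in your steps (ii)--(iv). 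One point in your favour: your monodromy matrix $L_3(\mathbf{y};b)L_2(\mathbf{x};a)$ is the one dictated by the paper's general proposition on invariants, whereas the paper's proof text writes $L_3(\mathbf{y};b)L_1(\mathbf{x};a)$ and prints $I_2$ with $b\,y_1/x_2$ rather than $a\,y_1/x_2$ --- both apparently typos, since only the expansion of $\tr(L_3(\mathbf{y};b)L_2(\mathbf{x};a))$ yields quantities genuinely conserved by $T$.
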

\begin{proof}
The trace of the monodromy matrix $M(\textbf{x},\textbf{y})=L_3(\textbf{y};b)L_1(\textbf{x};a)$ generates the following invariants
\begin{equation*}
I_1=x_1 (y_1y_2+b), \qquad I_2=x_1+y_1y_2+x_2y_2+b\frac{y_1}{x_2},
\end{equation*}
which are in involution with respect to the following Poisson bracket 
\begin{equation}\label{Poisson-TDNLS}
\left\{x_1,x_2\right\}=-x_1x_2,\quad \left\{x_2,y_1\right\}=x_2y_1,\quad \left\{x_2,y_2\right\}=-x_2y_2,\quad\left\{y_1,y_2\right\}=y_1y_2+\beta,
\end{equation}
and the rest $\left\{x_1,y_1\right\}=\left\{x_1,y_2\right\}=0$. The rank of the associated Poisson matrix is 4 and it is being preserved by map $T$. 
\end{proof}

Here, maps $S$ and $T$ are symplectic, since the Poisson matrices associated to the brackets \eqref{Poisson-SDNLS} and \eqref{Poisson-TDNLS} are invertible. The corresponding symplectic 2-forms are given by 
$$
\omega_S = -\frac{1}{x_1x_2+a}dx_1\wedge dx_2+\frac{x_2}{x_1x_2+ay_1}dx_1\wedge dy_1+\frac{x_1}{x_1x_2y_1+ay_1}dx_2\wedge dy_1+\frac{1}{y_1y_2}dy_1\wedge dy_2,
$$
and
$$
\omega_T = -\frac{1}{x_1x_2}dx_1\wedge dx_2+\frac{y_2}{x_1(y_1y_2+b)}dx_1\wedge dy_1+\frac{y_1}{x_1(y_1y_2+b)}dx_2\wedge dy_2-\frac{1}{y_1y_2+b}dx_2\wedge dy_1,
$$
related to $S$ and $T$, respectively.


\section{Conclusions}
In the first part of the paper, we employed Darboux matrices related to the DNLS equation and we derived an involutive and a non-involutive Yang-Baxter map, namely maps \eqref{YB-DNLS_gen} and \eqref{YB-DNLS_gen-2}. We showed that \eqref{YB-DNLS_gen} can be restricted to the famous $H_{IV}$ map on symplectic leaves. Moreover, map \eqref{YB-DNLS_gen-2} can be restricted to a novel, non-involutive, parametric Yang-Baxter map on invariant leaves. We conjecture that it is completely integrable, namely its invariants $I_1$ and $I_2$ are in involution with respect to a Poisson bracket.

In the second part of the paper, we used Darboux matrices associated to the NLS and the DNLS equation together with their degenerated versions as Lax triples to derive Entwining Yang-Baxter maps. In particular, we constructed the $(S,R,T)$ maps $\eqref{S-NLS},\eqref{Adler-Yamilov},\eqref{T-NLS}$ and $\eqref{S-DNLS},\eqref{YB-DNLS},\eqref{T-DNLS}$ and we showed that they are completely integrable in the Liouville sense.

Our results can be extended in the following ways:
\begin{enumerate}
\item Construction of solutions to the Entwining Tetrahedron equation;
\item Derivation of auto-B\"acklund transformations from entwining Yang-Baxter maps;
\item Vector generalisation of the derived maps;
\item Extensions of the results to the noncommutative case.
\end{enumerate}

Regarding 1., to the best of out knowledge, there are no known rational solutions to the  Tetrahedron equation with parameters. On the other hand, the relation between matrix trifactorisation problems and the Tetrahedron equation is already studied in \cite{Kashaev}. We believe that Darboux transformations can be employed to construct birational solutions to the parametric entwining Tetrahedron equation, namely
$$
Q^{123} \circ S^{145} \circ R^{246} \circ T^{356} = T^{356} \circ R^{246} \circ S^{145}\circ Q^{123},
$$
or a modification of it.

In point 3., it is meant that similar results to this paper can be obtained if one replaces the variables in the associate Lax operators by vectors. However, the trace of the monodromy matrix in this case may not give sufficient number of invariants in order to claim Liouville integrability, and invariants need to be found using other methods.

Concerning 2., recall that the variables of the entwining Yang-Baxter maps presented in this paper are potential functions-solutions to NLS type of Integrable PDEs. Demanding that the map preserves these solutions, we can obtain some relations between them. It should be examined in which cases these are indeed B\"acklund transformations.

Finally, a lot of interest has been drawn in the construction of noncommutative versions of discrete integrable systems and Yang-Baxter maps (indicatively we refer to \cite{Bobenko-Suris,Doliwa}). Moreover, the first steps in extending the theory of Yang-Baxter maps to the noncommutative Grassmann case were already made \cite{GKM, Georgi-Sasha, Sokor-Sasha-2, Sokor-Kouloukas}. All the results of this paper can be extended to the Grassmann case.

\section*{Acknowledgements}
The authors would like to thank T.E. Kouloukas for numerous discussions on this topic. S.K.R. acknowledges that this work was carried out within the framework of the State Programme of the Ministry of
Education and Science of the Russian Federation, Project No. 1.13560.2019/13.1. This work was initiated during SKR's visit to the University of Leeds on a LMS short visit grant (Ref. 21717). The work of G.P. was supported by an EPSRC grant EP/P012655/1.

\end{document}